\providecommand{\U}[1]{\protect\rule{.1in}{.1in}}
\renewcommand\and{\end{tabular}\kern-\tabcolsep\ and\ \kern-\tabcolsep\begin{tabular}[t]{c}}
\let\origthanks\thanks
\renewcommand\thanks[1]{\begingroup\let\rlap\relax\origthanks{#1}\endgroup}
\newtheorem{theorem}{Theorem}
\newtheorem{definition}[theorem]{Definition}
\newtheorem{lemma}[theorem]{Lemma}
\begin{document}

\title{Maximum-Likelihood Network Reconstruction for SIS Processes is NP-Hard}
\author{Bastian Prasse\thanks{Faculty of Electrical Engineering, Mathematics and
Computer Science, P.O Box 5031, 2600 GA Delft, The Netherlands; \emph{email}:
b.prasse@tudelft.nl, p.f.a.vanmieghem@tudelft.nl} \and Piet Van Mieghem\footnotemark[1]}
\date{Delft University of Technology \\
July 23, 2018}
\maketitle
\begin{abstract}
The knowledge of the network topology is imperative to precisely describing the viral dynamics of an SIS epidemic process. In scenarios for which the network topology is unknown, one resorts to reconstructing the network from observing the viral state trace. This work focusses on the impact of the viral state observations on the computational complexity of the resulting network reconstruction problem. We propose a novel method of \emph{constructing} a specific class of viral state traces from which the inference of the presence or absence of links is either easy or difficult. In particular, we use this construction to prove that the maximum-likelihood SIS network reconstruction is NP-hard. The NP-hardness holds for any adjacency matrix of a graph which is connected.
\end{abstract}

\section{Introduction}
We consider the network reconstruction of the sampled-time susceptible-infected-susceptible (SIS) process in a maximum-likelihood (ML) sense as introduced in \cite{prasseSIS}. We assume that the infection rate $\beta$ and the curing rate $\delta$ are known and that no self-infections occur; hence, the self-infection rate is $\epsilon=0$. We denote the number of nodes by $N$ and the $N \times 1$ viral state vector at discrete time $k$ by $x[k]$. At any time $k$, a node $i$ is either infected or susceptible, which is denoted by $x_i[k]= 1$ and $x_i[k] = 0$, respectively. We confine ourselves to connected graphs and denote by $\mathcal{A}$ the set of all $N\times N$ symmetric adjacency matrices $A$ with the elements $a_{ij}$. These adjacency matrices $A \in\mathcal{A}$ correspond to undirected, unweighted and connected graphs without self-loops. 
 
The network reconstruction problem for sampled-time SIS process is stated in the ML sense \cite{prasseSIS}. In contrast to the \textit{true} adjacency matrix $A$, which generated the viral states $x[k]$, the \textit{optimisation variable} in the ML estimation problem is denoted as $\hat{A}$. The \textit{solution} to the ML estimation problem, i.e. the adjacency matrix $\hat{A}$ which maximises the likelihood, is denoted by $A_\textup{\textrm{ML}}$.

\begin{definition}[SIS Network Reconstruction]\label{def:netreconstruction}
Given the viral state observations $x[k] \in \{0, 1\}^N$ from time $k=1$ to $k=n$ which originate from a sampled-time SIS process on an unknown adjacency matrix $A \in \mathcal{A}$, find the adjacency matrix $A_\textup{\textrm{ML}}$ which maximises the log-likelihood:
 \begin{align} \label{maxlikelihood}
 \begin{aligned}
A_\textup{\textrm{ML}} = &\underset{\hat{A}}{\textup{ arg max }} & & \log \left(  \mathrm{Pr} \left[x[1], ..., x[n] \Big| \hat{A} \right] \right)& \\
 &\textup{ s.t.} & &  \hat{a}_{ij} \in \{0, 1\}, \quad i,j = 1, ..., N&\\
  && &  \hat{a}_{ij}= \hat{a}_{ji}  , \quad  i,j = 1, ..., N&\\
   & & &  \hat{a}_{ii} =0, \quad   i =1, ..., N&
\end{aligned}
 \end{align}
\end{definition}
An instance of the optimisation problem (\ref{maxlikelihood}) is fully specified by the viral state observations $x[k] \in \{0, 1\}^N$ from time $k=1$ to $k=n$, where usually the \textit{observation length} $n$ satisfies $n >> N$. 

To stress the dependency of the ML estimate $A_\textup{\textrm{ML}}$ on a given viral state sequence $x[1], ..., x[n]$, we may also denote the ML estimate by $A_\textup{\textrm{ML}}\left( x[1], ..., x[n] \right)$. The SIS network reconstruction (\ref{maxlikelihood}) gives rise to two fundamental problems:
\begin{enumerate}
\item  \textit{How many observations $n$ are required such that the ML estimate $A_\textup{\textrm{ML}}\left( x[1], ..., x[n] \right)$ achieves a given accuracy $\varepsilon >0$ with high probability $p_{\textup{\textrm{acc}}} \approx 1$?} 

\item \textit{How to design an algorithm that computes the ML estimate $A_\textup{\textrm{ML}}\left( x[1], ..., x[n] \right)$ for a given viral state sequence $x[1], ..., x[n]$? What is the computational complexity of the SIS network reconstruction (\ref{maxlikelihood})?}
\end{enumerate}

The first problem translates to finding the minimal observation length $n_\textrm{min}$ such that
\begin{align*} 
\mathrm{Pr}\left(\lVert A_\textup{\textrm{ML}}\left( x[1], ..., x[n] \right) - A \rVert  \le \varepsilon\right) \ge p_\textrm{acc} \quad \forall n \ge n_\textrm{min},
\end{align*}
where $\lVert \cdot \rVert$ denotes some matrix norm. By proposing a heuristic to solve the ML estimation (\ref{maxlikelihood}), the results in \cite{prasseSIS} indicate that the minimum observation length $n_\textrm{min}$ increases subexponentially with respect to the number of nodes $N$: $\log_{10}(n_\textrm{min}) \approx N^\alpha +b$ for some constants $\alpha$ and $b$.

The focus of this work is on the second question. We prove that the ML estimation (\ref{maxlikelihood}) is NP-hard with respect to the number of nodes $N$ for \textit{any} connected adjacency matrix $A \in \mathcal{A}$. The idea of the proof is as follows: We aim to show that there is a polynomial-time reduction from the maximum cut problem to the ML estimation for the sampled-time SIS process (\ref{maxlikelihood}). Since the maximum cut problem is NP-complete \cite{bodlaender1991complexity}, this polynomial-time reduction proves that the ML estimation (\ref{maxlikelihood}) is NP-hard. As introduced in Section \ref{sec:max_cut}, the maximum cut problem can be stated as zero-one unconstrained quadratic programme (UQP). By comparison, we make the observation that the zero-one UQP which results from the maximum cut problem resembles the ML estimation (\ref{maxlikelihood}). We show that for every graph $G$ of the maximum cut problem, there is an SIS viral state sequence $x[1], ..., x[n]$ such that solving the ML estimation (\ref{maxlikelihood}) is equivalent to solving the maximum cut problem on the graph $G$. The polynomial-time reduction is presented in Section \ref{sec:reductionmaxcut}.

\section{Sampled-Time SIS Process}
\label{sec:sampledtimeSIS}

We give a brief summary of the sampled-time SIS process, and we refer to \cite{prasseSIS} for a more detailed description. The sampled-time Markov chain with sampling time $T$ is a discrete-time Markov chain \cite{van2014performance}. The probabilities of the viral state transitions depend on the adjacency matrix $A$. There are three kinds of transitions possible in the sampled-time Markov chain of the SIS process. These transitions are listed below and their probabilities are inferred from the continuous-time SIS equations.
\begin{description}
\item[Curing of a node] A single node $i$ changes from the infected state at discrete time $k$ to the susceptible state at discrete time $k+1$. The probability of this transition is 
\begin{equation}\label{healTransition}
\mathrm{Pr}\Big[x_i[k+1] = 0 \Big| x_i [k] = 1, x[k], A \Big] = \delta_T , 
\end{equation}
where the curing probability $\delta_T$ equals $\delta T$. 
\item[Infection of a node] A single node $i$ changes from the susceptible state at time instant $k$ to the infected state at time instant $k+1$ with the probability 
\begin{align}
\mathrm{Pr}\Big[x_i[k+1] = 1 \Big|  x_i [k] = 0, x[k], A\Big] &= \beta_T N_i(A, k), \label{infectionTrans}
\end{align}
where $N_i(A, k)$ is the number of infected nodes adjacent to node $i$ in $A$ at time $k$ and the infection probability $\beta_T$ equals $\beta T$. The  number of infected nodes adjacent to node $i$ equals 
\begin{align*}
N_i(A, k) = \sum_{j = 1}^N a_{ij} x_j[k].
\end{align*}

\item[No Change] No node changes its viral state from time $k$ to time $k+1$. This constant transition occurs when neither a curing nor an infection takes place, and hence
\begin{align}
\mathrm{Pr}\left[ x[k+1] =  x [k] \Big| x [k], A \right]  &=  1- \mathrm{Pr}\left[ \text{A node cures at } k+1\Big| x[k],  A \right] \nonumber\\
&\quad- \mathrm{Pr}\left[ \text{A node gets infected at } k+1\Big| x[k],  A \right], \label{constTrans} 
\end{align}
where the probabilities on the right-hand side can be derived from (\ref{healTransition}) and (\ref{infectionTrans}).
\end{description}
To ensure that (\ref{healTransition}), (\ref{infectionTrans}) and (\ref{constTrans}) are feasible expressions for probabilities, they have to be in $[0, 1]$ for all adjacency matrices $A \in \mathcal{A}$ and for all viral states $x[k]$. In \cite{prasseSIS}, an upper bound on the sampling time $T$ was derived, such that (\ref{healTransition}), (\ref{infectionTrans}) and (\ref{constTrans}) are in $[0, 1]$, and we assume that the sampling time $T$ does not exceed this upper bound.

\section{Maximum Cut}
\label{sec:max_cut}
We consider an undirected and unweighted graph $G = (\mathcal{N}, \mathcal{L})$, where $\mathcal{N} = \{1, ..., N\}$ is the set of nodes and $\mathcal{L}$ is the set of $L$ links. A cut-set of the graph $G$ is defined as follows \cite{karelCutSizeBounds, van2015epidemic}.
\begin{definition}[Cut-set]\label{def:cutset}
For a non-empty node subset $\mathcal{V} \subset \mathcal{N}$ of a graph and its complement $\bar{\mathcal{V}} = \mathcal{N} \setminus \mathcal{V}$, the cut-set $\partial \mathcal{V}$ is the set of all links that connect nodes in $\mathcal{V}$ to nodes in $\bar{\mathcal{V}}$. In other words:
\begin{align*}
\partial \mathcal{V} = \left\{ (i, j) \in \mathcal{L} \big| i \in \mathcal{V}, j \in \bar{\mathcal{V}} \right\}.
\end{align*}
\end{definition}
The cut size of a cut-set $\partial \mathcal{V}$ equals the number of links in the cut-set and is denoted as $|\partial \mathcal{V}|$. The maximum cut problem and the corresponding decision problem are as follows.
\begin{definition}[Maximum Cut Problem]\label{def:maxcut}
Given a graph $G$, find a cut $\partial \mathcal{V}$ of maximal cut size $|\partial \mathcal{V}|$.
\end{definition}
\begin{definition}[Maximum Cut Decision Problem]\label{def:maxcutdecision}
Given a natural number $\kappa$ and a graph $G$, is there a cut $\partial \mathcal{V}$ such that its cut size $|\partial \mathcal{V}|$ is at least $\kappa$? 
\end{definition}

 The maximum cut decision problem is NP-complete, as shown by Garey \textit{et al.} \cite{garey1976some}. Hence, the maximum cut problem is NP-hard \cite{cormen2009introduction}. The maximum cut problem can be equivalently stated as zero-one unconstrained quadratic programming (UQP) \cite{caprara2008constrained}
 \begin{align} \label{01uqp}
 \begin{aligned}
&\underset{y_1, ..., y_N}{\text{maximise }} & &  \sum^N_{i = 1}\sum^N_{j = i+1} a_{ij} (y_i(1-y_j)+y_j(1-y_i))& \\
 &\text{subject to } & &  y_i \in \{0, 1\}, \quad i= 1, ..., N. &
\end{aligned}
 \end{align}
The binary variable $y_i$ equals 1 if node $i$ is in the node set $\mathcal{V}$, and $y_i= 0$ if node $i$ is in the node set $\bar{\mathcal{V}}$. The optimisation problem (\ref{01uqp}) is equivalent to
 \begin{align}\label{01uqpsimpler}
 \begin{aligned}
&\underset{y_1, ..., y_N}{\text{maximise }} & &  \sum^N_{i = 1} \sum^N_{j = i+1} b_{ij} y_i y_j + \sum^N_{l = 1} b_{l} y_l&  \\
 &\text{subject to } & &  y_i \in \{0, 1\}, \quad i= 1, ..., N.&
\end{aligned}
 \end{align}
The coefficients of the objective function of (\ref{01uqpsimpler}) are given by 
\begin{align} \label{condcij}
b_{ij} = -2a_{ij} 
\end{align}
and the degree of node $l$
\begin{align}\label{condcl}
b_l = \sum^{N}_{j = 1} a_{lj}.
\end{align}
Since the elements $a_{ij}$ of the adjacency matrix $A$ are either zero or one, the coefficients are in the sets
\begin{align} \label{valuescij}
b_{ij} &\in \{ -2, 0\}
\end{align}
and
\begin{align} 
b_l &\in \{ 0, 1, ..., N-1\}.\label{valuescl}
\end{align}
The objective function $f$ of the optimisation problem (\ref{01uqpsimpler}) is a quadratic function which maps $N$ binary variables to a non-negative integer, i.e. $f: \{0, 1\}^N \mapsto \mathbb{N}_0$. Hence, the optimisation problem (\ref{01uqpsimpler}) is a special case of pseudo-Boolean optimisation \cite{boros2002pseudo}, in which the objective function $f$ maps $N$ binary variables to a real number, i.e. $f: \{0, 1\}^N \mapsto \mathbb{R}$. Rosenberg \cite{rosenberg1975reduction} showed that the optimisation of any pseudo-Boolean function can always be reduced in polynomial time to the optimisation of a quadratic pseudo-Boolean function. The general optimisation of a quadratic pseudo-Boolean function is of the form (\ref{01uqpsimpler}) with the difference that the coefficients $b_{ij}$ and $b_{l}$ may attain any value in $\mathbb{R}$ - not only the integer values in (\ref{valuescij}) and (\ref{valuescl}) - and is NP-hard \cite{garey1979guide}. If the coefficients $b_{ij}$ are non-negative real numbers, then the zero-one UQP (\ref{01uqpsimpler}) is polynomially solvable \cite{picard1975minimum}. There are other special cases for the range of values of the coefficients $b_{ij}$ and $b_{l}$ for which the zero-one UQP (\ref{01uqpsimpler}) is solvable in polynomial time \cite {pardalos1991graph, barahona1986solvable}. 

\section{Reduction of Maximum Cut to SIS Network Reconstruction}
\label{sec:reductionmaxcut}

We will show that any instance of the zero-one UQP (\ref{01uqpsimpler}) with coefficients $b_{ij}$ and $b_{l}$ in the sets (\ref{valuescij}) and (\ref{valuescl}), and thus any instance of the maximum cut problem, can be translated to an SIS network reconstruction problem (\ref{maxlikelihood}) in polynomial time. Hence, the SIS network reconstruction (\ref{maxlikelihood}) is NP-hard. Since the zero-one UQP (\ref{01uqpsimpler}) is not NP-hard for certain ranges \cite{picard1975minimum, pardalos1991graph, barahona1986solvable} of values of the coefficients $b_{ij}$ and $b_l$, we emphasise that the conditions (\ref{valuescij}) and (\ref{valuescl}) are crucial (at least sufficient) for the NP-hardness of the zero-one UQP (\ref{01uqpsimpler}). Thus, our aim is to show that the SIS network reconstruction problem (\ref{maxlikelihood}) can be translated to a zero-one UQP (\ref{01uqpsimpler}) with \textit{any}\footnote{More precisely, the coefficients $b_{ij}$ and $b_l$ do not attain any values in $\{ -2, 0\}$ and $\{ 0, 1, ..., N-1\}$ independently. Due to (\ref{condcij}) and (\ref{condcl}), it holds $b_l = -\frac{1}{2} \sum^{N}_{j = 1} b_{lj}$. We show the \textit{stronger} statement that, independently of the coefficients $b_{ij}$, the coefficients $b_{l}$ may attain any value in $\{ 0, 1, ..., N-1\}$.} coefficients $b_{ij}$ and $b_l$ in the sets given by (\ref{valuescij}) and (\ref{valuescl}). Since the SIS network reconstruction problem (\ref{maxlikelihood}) is fully specified by the viral state observations $x[1], ..., x[n]$, we aim to find viral state transitions $x[1], ..., x[n]$ such that solving the SIS network reconstruction problem (\ref{maxlikelihood}) is equivalent to solving the zero-one UQP (\ref{01uqpsimpler}). The proof of the NP-hardness of the SIS network reconstruction problem (\ref{maxlikelihood}) is based on four lemmas, which are stated below and whose proofs are given in the Appendix. 

 Since a graph $G$ given by an adjacency matrix $A$ in $\mathcal{A}$ is connected, there is a node $l$ such that the graph $G$ remains connected if node $l$ is removed: Indeed, in any connected graph, there exists a spanning tree that connects all the nodes. In any tree, there exists a node $l$ with degree one (a leaf node), whose removal does not disconnect the spanning tree and hence neither the graph. Without loss of generality, we label this node $l$ as node 1.
 
Our approach is based on stating a reduced-size version of the ML estimation (\ref{maxlikelihood}), namely only with respect to the links $a_{1i}$ which are incident to node 1. Since the graph given by an adjacency matrix $A$ in $\mathcal{A}$ is connected, node 1 has at least one neighbour. Without loss of generality, we label this neighbour as node 2. Furthermore, we consider that $a_{12}=1$ is known. In the following, we abbreviate 
\begin{align*}
\mathrm{Pr} \left[x[1], ..., x[n] \Big| \hat{a}_{13}, ..., \hat{a}_{1N}, \hat{a}_{12}= a_{12}, \hat{a}_{ij} = a_{ij} ~ \forall i,j \ge 2 \right],
\end{align*}
i.e. the likelihood when the elements $\hat{a}_{12}$ and $\hat{a}_{ij}$ for $i, j\ge 2$ are fixed to the true values, formally by 
\begin{align*}
\mathrm{Pr} \left[x[1], ..., x[n] \Big| \hat{a}_{13}, ..., \hat{a}_{1N} \right],
\end{align*}
and we introduce the following reduced-size SIS network estimation problem: 
\begin{definition}[Reduced-Size SIS Network Reconstruction]\label{def:redsizenetrecon}
Given the links $a_{12} = 1$ and $a_{ij}$, where $i \ge 2$ and $j \ge 2$, of the matrix $A \in \mathcal{A}$ and the viral state observations $x[k] \in \{0, 1\}^N$ from time $k=1$ to time $k=n$, which resulted from a sampled-time SIS process with the adjacency matrix $A$, find the links $(A_\textup{\textrm{ML}})_{13}, ..., (A_\textup{\textrm{ML}})_{1N}$ which maximise the log-likelihood:
\begin{align} \label{maxlikelireduced}
 \begin{aligned}
\left((A_\textup{\textrm{ML}})_{13}, ..., (A_\textup{\textrm{ML}})_{1N} \right) =&\underset{\hat{a}_{13}, ..., \hat{a}_{1N}}{\textup{ arg max }} & &  \log \left(\mathrm{Pr} \left[x[1], ..., x[n] \Big| \hat{a}_{13}, ..., \hat{a}_{1N} \right] \right) & \\
 &\textup{ s.t. } & &  \hat{a}_{1i} \in \{0, 1\}, \quad i= 3, ..., N.&
\end{aligned}
\end{align}
\end{definition}
Lemma \ref{lemma:sis_as_UQP} states that solving the reduced-size SIS network reconstruction (\ref{maxlikelireduced}) is equivalent to solving a zero-one UQP with particular coefficients:
\begin{lemma}[Reduced-Size SIS Network Reconstruction as Zero-One UQP]\label{lemma:sis_as_UQP}
For some natural numbers $m_0$, $m_{1l}$, $m_{2l} \in \mathbb{N}$, $l \in \{3, ..., N\}$, define the coefficients
 \begin{align}
 c_{ij} &\in \{-2, 0\}, \quad \quad \quad \quad \quad \quad \quad  i,j = 3, ..., N, \label{cijLEmm}\\
 c_l &=   \frac{m_{1l}}{m_0}  \lambda_{+}   + \frac{m_{2l}}{m_0}  \lambda_{-}  + \eta_l,\quad \quad l = 3, ..., N, \label{clLEmm}
 \end{align}  
where $\lambda_{+} > 0 $, $\lambda_{-}<0$ and $\eta_l \ge 0$ are constant and are given by the equations (\ref{lambdaPlus}), (\ref{lambdaMinus}) and (\ref{etaL}), respectively. For any coefficients $c_{ij}$ and $c_l$ given by (\ref{cijLEmm}) and (\ref{clLEmm}) and for any connected adjacency matrix $A\in \mathcal{A}$, there is a viral state sequence $x[k]$ from time $k=1$ to a finite time $k=n$ such that the reduced-size SIS network reconstruction problem (\ref{maxlikelireduced}) becomes:
\begin{align}\label{opt_prob_ooo}
 \begin{aligned}
&\underset{\hat{a}_{13}, ..., \hat{a}_{1N}}{\textup{max }} & & \sum^{N}_{i = 3} \sum^{N}_{j = i+1} c_{ij} \hat{a}_{1i} \hat{a}_{1j} + \sum^{N}_{l = 3} c_l \hat{a}_{1l} & \\
 &\textup{s.t. } & &  \hat{a}_{1i} \in \{0, 1\}, \quad i= 3, ..., N&
\end{aligned}
\end{align}
\end{lemma}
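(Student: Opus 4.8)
The plan is to prove the lemma by an explicit \emph{construction} of the viral state sequence, exploiting the Markov property to reduce the log-likelihood to a sum of per-transition contributions and then designing ``gadget'' transitions whose contributions assemble into the objective (\ref{opt_prob_ooo}). By the Markov property of the sampled-time SIS process, $\log \mathrm{Pr}[x[1], \ldots, x[n] \mid \hat{a}_{13}, \ldots, \hat{a}_{1N}]$ splits as $\log \mathrm{Pr}[x[1]] + \sum_{k=1}^{n-1} \log \mathrm{Pr}[x[k+1] \mid x[k], \hat{A}]$, where only $\hat{a}_{13}, \ldots, \hat{a}_{1N}$ are free and all other entries of $\hat{A}$ are fixed to $A$. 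Since curing has the constant probability $\delta_T$, the only transitions whose probability depends on the free variables are: infections of node $1$, whose probability $\beta_T N_1(\hat{A},k) = \beta_T(\hat{a}_{12} x_2[k] + \sum_{l \geq 3} \hat{a}_{1l} x_l[k])$ is affine in the $\hat{a}_{1l}$; infections of a node $l \geq 3$, whose probability $\beta_T N_l(\hat{A},k)$ carries the term $\hat{a}_{1l} x_1[k]$ through the symmetry $\hat{a}_{l1} = \hat{a}_{1l}$; and no-change steps, whose probability depends on the $N_i$ of the currently susceptible nodes. Thus the log-likelihood is a sum of logarithms of affine functions of the binary variables.

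Next I would produce the quadratic terms. The key observation is that $\log N_1$ is \emph{not} affine in the binary variables: for a transition in which node $1$ is susceptible at time $k$, becomes infected at time $k+1$, and whose only infected potential neighbours are node $2$ and two nodes $i,j \in \{3,\ldots,N\}$, one has $N_1 = 1 + \hat{a}_{1i} + \hat{a}_{1j}$, so the contribution is $\log\!\left(1 + \hat{a}_{1i} + \hat{a}_{1j}\right)$. Evaluated on $\{0,1\}^2$ this equals $(\log 2)(\hat{a}_{1i} + \hat{a}_{1j}) + \log\!\left(\tfrac{3}{4}\right)\hat{a}_{1i}\hat{a}_{1j}$, which supplies a genuine bilinear term with a strictly negative coefficient. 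Crucially, the fixed neighbour $a_{12}=1$ guarantees $N_1 \geq 1$ for every admissible $\hat{A}$, so the transition always has strictly positive probability and the log-likelihood never diverges to $-\infty$. Placing exactly one such gadget for every pair $\{i,j\}$ with $c_{ij} = -2$ and rescaling the whole objective by the positive constant $\rho = -2/\log(3/4)$ (which leaves the argmax unchanged) makes each bilinear coefficient equal to $-2$, while pairs with $c_{ij}=0$ simply receive no gadget.

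Finally I would tune the linear coefficients. The bilinear gadgets leave behind a linear residue proportional to $\log 2$ per incident pair, which I would collect into the baseline term $\eta_l \geq 0$. To cover the remaining range I would add two further families of single-event gadgets for each $l$: one in which node $l$ is infected while node $1$ is infected (so that $N_l$ increases with $\hat{a}_{1l}$, contributing a fixed positive amount $\lambda_{+}>0$ per occurrence), and one in which node $l$ stays susceptible while node $1$ is infected (so that the no-change probability decreases with $\hat{a}_{1l}$, contributing a fixed negative amount $\lambda_{-}<0$ per occurrence). Repeating them $m_{1l}$ and $m_{2l}$ times over a common horizon of $m_0$ reference steps yields $c_l = \tfrac{m_{1l}}{m_0}\lambda_{+} + \tfrac{m_{2l}}{m_0}\lambda_{-} + \eta_l$, matching (\ref{clLEmm}); summing all gadget contributions then reproduces (\ref{opt_prob_ooo}) up to an additive constant.

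I expect the main obstacle to be the \emph{bookkeeping of decoupling and validity} rather than any single calculation: one must verify that every gadget transition is a legitimate single-event SIS step of strictly positive probability for \emph{all} admissible $\hat{A}$ (which is exactly what the anchor neighbour node $2$ secures), that all fixed links $a_{ij}$ with $i,j \geq 2$ are respected by the constructed states so the trace is consistent with $A$, and above all that the contributions separate, so that setting $m_{1l}, m_{2l}$ to tune a given $c_l$ perturbs neither the bilinear coefficients $c_{ij}$ nor the coefficients $c_{l'}$ for $l' \neq l$. Establishing this clean separation, together with the exact identity $\log(1+\hat{a}_{1i}+\hat{a}_{1j}) = (\log 2)(\hat{a}_{1i}+\hat{a}_{1j}) + \log(3/4)\,\hat{a}_{1i}\hat{a}_{1j}$ on binary inputs, is the crux of the argument.
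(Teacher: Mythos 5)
Your outline coincides with the paper's proof: decompose the log-likelihood via the Markov property, build gadget transitions whose log-probabilities are (bi)linear in the free variables, and your quadratic gadget --- node $1$ becoming infected while nodes $2,i,j$ are infected, the identity $\log(1+\hat{a}_{1i}+\hat{a}_{1j})=(\log 2)(\hat{a}_{1i}+\hat{a}_{1j})+\log(3/4)\,\hat{a}_{1i}\hat{a}_{1j}$, and the rescaling by $-2/\log(3/4)$ --- is exactly the paper's transition $\mathfrak{I}_{ij}$. However, both of your linear gadgets are oriented the wrong way round and, as described, fail. In your positive gadget node $l$ gets infected with node $1$ as the infected neighbour, so the transition probability is $\beta_T\hat{a}_{1l}$; when $\hat{a}_{1l}=0$ this is zero and the log-likelihood is $-\infty$, so the gadget acts as a hard constraint forcing $\hat{a}_{1l}=1$ rather than contributing a finite coefficient $\lambda_{+}$. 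The anchor $a_{12}=1$ that you correctly invoke protects infections \emph{of node 1}, not infections of node $l$; you would need a fixed infected neighbour of $l$ among $\{2,\dots,N\}$, which you do not arrange. In your negative gadget node $1$ is infected and node $l$ stays susceptible in a no-change step; but then \emph{every} susceptible node $m\ge 3$ contributes its own free variable $\hat{a}_{1m}$ to the no-change probability, which becomes $C-\beta_T\sum_m\hat{a}_{1m}$ for a constant $C$. Its logarithm is a single $\log$ of a sum of all the variables --- a high-degree multilinear function, not a linear term in $\hat{a}_{1l}$ --- which destroys exactly the separation you yourself identify as the crux. The paper avoids both problems by reversing the roles: in $\mathfrak{I}_l$ nodes $2$ and $l$ are infected and node $1$ gets infected (probability $\beta_T(1+\hat{a}_{1l})$, always positive thanks to node $2$), and in $\mathfrak{C}_l$ node $l$ is the \emph{sole} infected node and the state is constant (probability $\xi-\beta_T\hat{a}_{1l}$ with $\xi$ independent of the free variables), so each gadget touches exactly one free variable and never has probability zero.

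Two further, smaller gaps. First, your normalisation is inconsistent: you place one quadratic gadget per pair yet need to divide the linear contributions by $m_0$, but dividing the whole objective by $m_0$ would shrink the quadratic coefficients to $-2/m_0$. The paper repeats each $\mathfrak{I}_{ij}$ exactly $m_0$ times ($m_{ij}=m_0 r_{ij}$) so that after multiplying by $\mu/m_0$ with $\mu=-2/\log(3/4)$ the quadratic coefficients remain in $\{-2,0\}$ while the linear ones acquire the factors $m_{1l}/m_0$ and $m_{2l}/m_0$. Second, connecting the gadgets is not mere bookkeeping to be deferred: the paper's argument uses the fact that the graph minus node $1$ remains connected to route between gadget states using only non-constant transitions with node $1$ susceptible throughout, which is precisely what guarantees that the intermediate steps contribute nothing to any $c_{ij}$ or $c_l$. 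Without that construction the claimed decoupling is unproved.
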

\begin{proof}
Appendix \ref{appendix:sis_as_UQP}.
\end{proof} 
Comparing the objective function of (\ref{opt_prob_ooo}) to the objective function in the zero-one UQP (\ref{01uqpsimpler}) shows that they are of the same form\footnote{The reduced-size SIS network reconstruction (\ref{maxlikelireduced}) for a graph with $N$ nodes results in a zero-one UQP (\ref{01uqpsimpler}) with $N-2$ optimisation variables $\hat{a}_{13}, ..., \hat{a}_{1N}$. Strictly speaking, to obtain the zero-one UQP (\ref{01uqpsimpler}) with $N$ optimisation variables, one has to consider the reduced-size SIS network reconstruction (\ref{maxlikelireduced}) for graphs with $N+2$ nodes. For ease of exposition, we omit the detail of the deviation of the number of optimisation variables of the two  optimisation problems (\ref{01uqpsimpler}) and (\ref{maxlikelireduced}).}: the binary variables $y_j$ in (\ref{01uqpsimpler}) correspond to $\hat{a}_{1j}$, and the coefficients $b_{ij}$ and $b_l$ in (\ref{01uqpsimpler}) are replaced by $c_{ij}$ and $c_l$ in (\ref{opt_prob_ooo}), respectively.

As stated in the beginning of Section \ref{sec:reductionmaxcut}, a crucial condition for the NP-hardness of the zero-one UQP (\ref{01uqpsimpler}) is that its coefficients are in the sets $b_{ij} \in \{-2, 0\}$ and $b_l \in \{0, 1, ..., N-1\}$. To show the NP-hardness of the zero-one UQP (\ref{opt_prob_ooo}), we have to show that also the coefficients $c_{ij}$ and $c_l$ attain \textit{any} value in $\{-2, 0\}$ and $\{0, 1, ..., N-1\}$, respectively. As stated by (\ref{cijLEmm}), the coefficients $c_{ij}$ may attain either value in $\{-2, 0\}$. The remaining condition that the coefficients $c_l$, given by (\ref{clLEmm}), may attain any value in $\{0, 1, ..., N-1\}$ \textit{exactly} does generally not hold. Nevertheless, the coefficients $c_l$ may approach any $b_l \in \{0, 1, ..., N-1\}$ \textit{arbitrarily close}, as stated by Lemma \ref{lemma:approaching}.
\begin{lemma}[Coefficients Approach Any Number]\label{lemma:approaching}
The coefficients $c_l$ of the optimisation problem (\ref{opt_prob_ooo}), given by (\ref{clLEmm}), may approach any numbers $b_l \in \mathbb{R}$, $l = 3, ..., N$, arbitrarily close for suitably chosen natural numbers $m_0, m_{1l}, m_{2l} \in \mathbb{N}$:
\begin{align}\label{epsiloncloseness}
\forall\varepsilon \in \mathbb{R}^+, l\in \{3, ..., N\}, z_l \in \mathbb{R}: ~~ \exists m_0, m_{1l}, m_{2l} \in \mathbb{N} \quad \text{such that} \quad |c_l -b_l | \le \varepsilon 
\end{align} 
\end{lemma}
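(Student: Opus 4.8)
The plan is to reduce the statement to a one-dimensional approximation that is made easy by the scaling freedom in the common denominator $m_0$. Since $\eta_l \ge 0$ is a fixed constant, I would first absorb it into the target: setting $\beta_l := b_l - \eta_l$, the desired inequality $|c_l - b_l| \le \varepsilon$ is equivalent to
\begin{align*}
\left| \frac{m_{1l}}{m_0}\lambda_+ + \frac{m_{2l}}{m_0}\lambda_- - \beta_l \right| \le \varepsilon.
\end{align*}
Writing $p := \lambda_+ > 0$ and $q := -\lambda_- > 0$ and clearing the denominator, the task becomes: given $\beta_l \in \mathbb{R}$ and $\varepsilon > 0$, exhibit $m_0, m_{1l}, m_{2l} \in \mathbb{N}$ with $|m_{1l}\, p - m_{2l}\, q - \beta_l m_0| \le \varepsilon m_0$.

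The key observation is that, although the two step sizes $p$ and $q$ are fixed, the denominator $m_0$ is free, so I can first make $m_0$ large enough that a single step of size $p$ is negligible against the tolerance $\varepsilon m_0$, and only afterwards fit the numerator to the rescaled target. Concretely, I would choose $m_0 \ge p/\varepsilon$; note that this lower bound depends only on $\varepsilon$ and the global constant $\lambda_+$, not on $l$, so a single $m_0$ serves all $l \in \{3, \dots, N\}$ simultaneously. Because $\lambda_+$ and $\lambda_-$ carry opposite signs, the two free numerators reach targets of either sign uniformly: for each $l$ I pick $m_{2l} \in \mathbb{N}$ large enough that $\beta_l m_0 + m_{2l}\, q \ge p$ (possible since $q > 0$), and then set $m_{1l}$ to the nearest integer of $(\beta_l m_0 + m_{2l}\, q)/p$. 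Since that quotient is at least $1$, the rounding produces $m_{1l} \in \mathbb{N}$, and by construction
\begin{align*}
\left| m_{1l}\, p - \left( \beta_l m_0 + m_{2l}\, q \right) \right| \le \frac{p}{2}.
\end{align*}

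Dividing by $m_0$ then gives $|c_l - b_l| = |m_{1l}\, p - m_{2l}\, q - \beta_l m_0|/m_0 \le p/(2 m_0) \le \varepsilon$, which is the claim. I expect no serious obstacle: the only points needing care are (i) checking that the constructed $m_0, m_{1l}, m_{2l}$ are genuine positive integers rather than reals, which the rounding step together with the bound $\beta_l m_0 + m_{2l}\, q \ge p$ guarantees, and (ii) confirming that a negative target $\beta_l$ (arising when $b_l < \eta_l$) is handled by the same recipe, since the negative generator $\lambda_-$ supplies arbitrarily large negative contributions through $m_{2l}$. It is worth emphasising that this argument is purely metric and requires no number-theoretic input such as irrationality of $\lambda_+/\lambda_-$: the combination of two opposite-sign generators with a free common denominator already renders the attainable values of $c_l$ dense in $\mathbb{R}$.
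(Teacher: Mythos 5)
Your proposal is correct and follows essentially the same route as the paper's proof, which argues that letting $m_0$ grow makes the step sizes $\lambda_+/m_0$ and $\lambda_-/m_0$ arbitrarily small so that $m_{1l}$ and $m_{2l}$ can then be tuned to hit any target. Your version is simply a more explicit rendering of that sketch (with the rounding bound $p/(2m_0)\le\varepsilon$ and the useful observation that one $m_0$ works uniformly over $l$), and it correctly avoids any appeal to irrationality.
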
 
\begin{proof}
Appendix \ref{appendix:approaching}.
\end{proof} 

 If the deviation $(c_l - b_l)$ is positive and not greater than a threshold $\varepsilon = \frac{1}{N}$, then we can solve any instance of the maximum-cut problem by solving an instance of the reduced-size SIS network reconstruction (\ref{maxlikelireduced}):
\begin{lemma}[Sufficiently Small Error on the UQP Coefficients]\label{lemma:sufficientlyClose}
If $c_l \ge b_l$ and $c_l - b_l < \frac{1}{N}$ for all $l\in \{3, ..., N\}$, then the solution to the reduced-size SIS network reconstruction problem (\ref{opt_prob_ooo}) is also a solution to the zero-one UQP (\ref{01uqpsimpler}). 
\end{lemma}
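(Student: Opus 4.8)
The plan is to compare the two objective functions directly and to exploit the integrality of the UQP objective. Write $f(y) = \sum_{i}\sum_{j>i} b_{ij} y_i y_j + \sum_l b_l y_l$ for the objective of the zero-one UQP (\ref{01uqpsimpler}), and let $g$ denote the objective of the reduced-size problem (\ref{opt_prob_ooo}), identifying the binary variable $\hat a_{1j}$ with $y_j$ (absorbing the index shift between the $N$-variable UQP and the reduced $(N-2)$-variable problem flagged in the footnote). Since the quadratic coefficients are matched, $c_{ij}=b_{ij}$, both ranging over $\{-2,0\}$ by (\ref{cijLEmm}) and (\ref{valuescij}), the quadratic parts cancel and the two objectives differ only in their linear terms:
\begin{align*}
g(y) - f(y) = \sum_l (c_l - b_l)\, y_l.
\end{align*}

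First I would bound this difference. By hypothesis $0 \le c_l - b_l < \tfrac{1}{N}$ for each of the (at most $N$) indices $l$, and $y_l \in \{0,1\}$, so summing yields $0 \le g(y) - f(y) < 1$ for every binary vector $y$; in particular $g(y) \ge f(y)$ for all $y$. The threshold $\tfrac{1}{N}$ is chosen precisely so that the total linear perturbation stays strictly below $1$. Next I would invoke integrality: since the $b_{ij}$ and $b_l$ are integers by (\ref{valuescij}) and (\ref{valuescl}), $f(y) \in \mathbb{Z}$ for every binary $y$, hence any two distinct values of $f$ differ by at least $1$.

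The conclusion then follows by contradiction. Let $y^*$ be a maximiser of $g$, i.e. of (\ref{opt_prob_ooo}), and suppose it is \emph{not} a maximiser of $f$. Then some $y'$ satisfies $f(y') > f(y^*)$, hence $f(y') \ge f(y^*) + 1$ by integrality. Combining the bounds gives
\begin{align*}
g(y') \ge f(y') \ge f(y^*) + 1 > f(y^*) + \bigl(g(y^*) - f(y^*)\bigr) = g(y^*),
\end{align*}
where the strict inequality uses $g(y^*) - f(y^*) < 1$. This contradicts the optimality of $y^*$ for $g$, so every maximiser of (\ref{opt_prob_ooo}) also maximises (\ref{01uqpsimpler}).

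The argument is short once set up, and the main subtlety is bookkeeping rather than ideas: confirming that matching $c_{ij}=b_{ij}$ is admissible, counting the linear terms so the perturbation sum is provably $<1$, and reconciling the number of optimisation variables of the two problems. The conceptual crux is the gap-versus-integrality comparison — a sub-unit perturbation cannot reverse the ordering of two objective values that are separated by a full integer — which is exactly what the bound $c_l - b_l < \tfrac{1}{N}$ guarantees.
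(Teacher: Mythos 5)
Your proof is correct and takes essentially the same route as the paper: both arguments rest on the fact that the non-negative linear perturbation totals strictly less than $1$ (since each $c_l-b_l<\tfrac{1}{N}$ over at most $N$ indices) while the integer-valued objective $f$ cannot change by less than $1$ between distinct values. The paper packages the integrality as a gap $\Delta f\ge 1$ between the optimum and the next-best value and bounds $\tilde{f}_\textrm{obj}(\tilde{y}_\textrm{opt})-\tilde{f}_\textrm{obj}(y)\ge \Delta f-N\varepsilon_\textrm{max}$, which is the same comparison you carry out by contradiction.
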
 
\begin{proof}
Appendix \ref{appendix:sufficientlyClose}.
\end{proof} 

Lemma \ref{lemma:sis_as_UQP}, Lemma \ref{lemma:approaching} and Lemma \ref{lemma:sufficientlyClose} prove the NP-hardness of the reduced-size SIS network reconstruction (\ref{maxlikelireduced}). Lemma \ref{lemma:redsizenetrecon} states how to obtain the reduced-size SIS network reconstruction (\ref{01uqpsimpler}) from the original, full-size SIS network reconstruction problem (\ref{maxlikelihood}).

\begin{lemma}[From Full-Size to Reduced-Size SIS Network Reconstruction]\label{lemma:redsizenetrecon}
For all connected adjacency matrices $A \in \mathcal{A}$ and all viral state sequence $x[1], ..., x[n_1]$, there is a viral state sequence $x[1], ..., x[n_2]$ with $n_2 > n_1$, such that the solution $A_\textup{\textrm{ML}}$ to the full-size SIS network reconstruction (\ref{maxlikelihood}) satisfies:
\begin{enumerate}
\item The following elements of $A_\textup{\textrm{ML}}$ equal the elements of the true adjacency matrix $A$:
\begin{align*}
 (A_\textup{\textrm{ML}})_{12} = a_{12}= 1 \quad  \text{and} \quad (A_\textup{\textrm{ML}})_{ij} = a_{ij} \quad \text{for all } i, j \ge 2
\end{align*}
\item The other elements of $A_\textup{\textrm{ML}}$ are the solution to the reduced-size SIS network reconstruction problem (\ref{maxlikelireduced}) whose objective function is changed by an additive term:
\begin{align} \label{redSizeSISAmend}
 \begin{aligned}
\left((A_\textup{\textrm{ML}})_{13}, ..., (A_\textup{\textrm{ML}})_{1N} \right)= &\underset{\hat{a}_{13}, ..., \hat{a}_{1N}}{\textup{ arg max }} & &  \log\left(\mathrm{Pr} \left[x[1], ..., x[n_1] \Big| \hat{a}_{13}, ..., \hat{a}_{1N} \right]\right)\\
&&& +\sum^N_{l=2} \kappa_l \log\left( 1 - \delta_T - \beta_T d_l(A) - \beta_T \hat{a}_{1l}\right)& \\
 &\textup{s.t. } & &  \hat{a}_{1l} \in \{0, 1\}, \quad l= 3, ..., N.&
\end{aligned}
\end{align}
Here, $d_l(A)= \sum^N_{m=2} a_{ml}$ denotes the degree of node $l$ when node 1 is removed from the graph given by the adjacency matrix $A$, and $\kappa_l$ is a natural number which is independent of the optimisation variables $\hat{a}_{13}, ..., \hat{a}_{1N}$.
\end{enumerate}
\end{lemma}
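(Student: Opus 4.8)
The plan is to prove both claims at once by appending to the given trace $x[1],\dots,x[n_1]$ a finite block of transitions whose only effect is to ``pin'' every entry $\hat a_{ij}$ with $i,j\ge 2$ together with $\hat a_{12}$ to the true values, while contributing exactly the additive term in (\ref{redSizeSISAmend}) to the remaining free variables $\hat a_{13},\dots,\hat a_{1N}$. I would start from the fact that the log-likelihood in (\ref{maxlikelihood}) decomposes as a sum of the log-probabilities of the individual one-step transitions (\ref{healTransition}), (\ref{infectionTrans}), (\ref{constTrans}), plus an $\hat A$-independent term $\log\mathrm{Pr}[x[1]]$. Hence $\log\mathrm{Pr}[x[1],\dots,x[n_2]\mid\hat A]$ splits into the original part $\log\mathrm{Pr}[x[1],\dots,x[n_1]\mid\hat A]$ and the contribution of the appended block, and it suffices to engineer the latter.

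The appended block would consist of three kinds of transitions. First, for every true edge $(i,j)$ with $\{i,j\}\subseteq\{2,\dots,N\}$ and for the edge $(1,2)$, I include one \emph{infection gadget}: the chain sits in the state where $j$ is the unique infected node and node $i$ then becomes infected. By (\ref{infectionTrans}) this transition has probability $\beta_T\,\hat a_{ij}$ (only $\hat a_{ij}$ survives because the single infected neighbour of $i$ is $j$), so its log-probability is $\log\beta_T$ if $\hat a_{ij}=1$ and $-\infty$ if $\hat a_{ij}=0$; this is what will force the present edges to $1$. Second, for each $l\in\{2,\dots,N\}$ I place $\kappa_l$ copies of the no-change transition out of the state in which $l$ is the only infected node. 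By (\ref{constTrans}) this no-change probability equals $1-\delta_T-\beta_T\sum_{m\ne l}\hat a_{ml}$, and once the entries $\hat a_{ml}$ with $m\ge 2$ are fixed to their true values the sum collapses to $d_l(A)+\hat a_{1l}$, so these $\kappa_l$ copies contribute precisely $\kappa_l\log(1-\delta_T-\beta_T d_l(A)-\beta_T\hat a_{1l})$, the additive term in (\ref{redSizeSISAmend}). Third, I insert \emph{connector} transitions to stitch the gadgets into a legal SIS walk; using that the graph stays connected when node $1$ is removed, I can realise every connector either as a curing step (probability $\delta_T$, independent of $\hat A$) or as an infection of some node $p\ge 2$ along a true edge with node $1$ kept susceptible, so that no connector probability depends on any free variable $\hat a_{1l}$, $l\ge 3$.

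With this construction I would then read off the two claims. Since the true matrix $A$ assigns positive probability to every transition in the block (by the sampling-time bound the relevant no-change probabilities are strictly positive), the optimal value is finite; consequently no present edge can be set to $0$, for that would make an infection gadget, and hence the whole likelihood, equal to $-\infty$. This gives $\hat a_{ij}=a_{ij}=1$ for every present edge among $\{2,\dots,N\}$ and $\hat a_{12}=1$. For the absent edges I argue by a single-flip exchange: turning one spurious edge $(i,j)$ off increases the two no-change terms for $l=i$ and $l=j$ by at least $\kappa_{\min}\gamma$, where $\gamma>0$ is a uniform lower bound on the per-edge log-gap of (\ref{constTrans}), while it changes the original-trace part and the connectors by at most a constant $M$ that does not depend on the $\kappa_l$; choosing every $\kappa_l$ with $\kappa_l\gamma>M$ makes switching a spurious edge off strictly improving, so at the optimum all absent $\hat a_{ij}$, $i,j\ge 2$, vanish. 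This establishes claim~1. Substituting the now-fixed entries, the objective as a function of the free variables reduces to $\log\mathrm{Pr}[x[1],\dots,x[n_1]\mid\hat a_{13},\dots,\hat a_{1N}]+\sum_{l=2}^N\kappa_l\log(1-\delta_T-\beta_T d_l(A)-\beta_T\hat a_{1l})$, which is exactly (\ref{redSizeSISAmend}), giving claim~2.

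The main obstacle I anticipate is the joint pinning in claim~1 rather than any single computation: I must guarantee that the $-\infty$ penalties and the large-$\kappa$ penalties act on the $\binom{N-1}{2}$ entries with $i,j\ge 2$ and on $\hat a_{12}$ without touching the free variables $\hat a_{1l}$. This forces the careful design of the connectors (node $1$ susceptible throughout except in the single $(1,2)$ gadget, and infections only along true edges) so that the $\hat A$-dependence of the block is confined to the intended entries, and it forces the order of quantifiers to be right: first fix the trace and hence the $\kappa$-independent bound $M$, and only then choose the repetition counts $\kappa_l$ large enough. Verifying that the single infected-node no-change gadgets simultaneously produce the additive term and supply enough penalty to kill every absent edge among $\{2,\dots,N\}$ is the delicate book-keeping step.
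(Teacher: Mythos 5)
Your construction coincides with the paper's own proof: the same infection gadgets (the paper's $\mathfrak{E}_{ij}$, giving probability $0$ when a true link is omitted), the same $\kappa_l$-fold repetition of the single-infected-node no-change transition (the paper's $\mathfrak{A}_l$) to both penalise spurious links and generate the additive term, the same Markov-property decomposition, and the same ``everything else is bounded independently of $\kappa_l$, so choose $\kappa_l$ large'' argument (your single-flip exchange is just a local rephrasing of the paper's comparison of a degree-correct candidate $\hat{A}_1$ with a degree-excessive candidate $\hat{A}_2$). The proposal is correct and essentially identical in approach; the only cosmetic difference is that the paper assembles the connecting walk explicitly via a Chinese Postman tour, which you leave implicit.
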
 
\begin{proof}
Appendix \ref{appendix:redsizenetrecon}.
\end{proof} 
The optimisation problem (\ref{redSizeSISAmend}) resembles the reduced-size SIS network reconstruction (\ref{maxlikelireduced}), but the objective functions differ by the additive term $\sum^N_{l=2} \kappa_l \log\left( 1 - \delta_T - \beta_T d_l(A) - \beta_T \hat{a}_{1l}\right)$. We show in Appendix \ref{appendix:NPhard} that the additive term does not have an impact on the difficulty: The NP-hardness of the reduced-size SIS network reconstruction (\ref{maxlikelireduced}) implies the NP-hardness of the optimisation problem (\ref{redSizeSISAmend}). Since solving the full-size SIS network reconstruction problem (\ref{maxlikelihood}) with the viral state sequence $x[1], ..., x[n_2]$ as input implies solving the NP-hard optimisation problem (\ref{redSizeSISAmend}), we obtain the main theorem of this work: 
\begin{theorem}[SIS Network Reconstruction is NP-Hard] \label{theorem:NPhard}
For all connected adjacency matrices $A\in \mathcal{A}$, the SIS network reconstruction problem (\ref{maxlikelihood}) is NP-hard.
\end{theorem}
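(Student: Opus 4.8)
The plan is to build a polynomial-time reduction from the maximum cut problem of Definition \ref{def:maxcut}, which is NP-hard, to the full-size SIS network reconstruction (\ref{maxlikelihood}). Since the four preceding lemmas already supply the technical content, the proof of the theorem is mainly a matter of chaining them together in the right order while checking that the constructed instance stays polynomial in $N$.

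I would start from an arbitrary maximum cut instance on a graph $G$ and write it as the zero-one UQP (\ref{01uqpsimpler}) with integer coefficients $b_{ij} \in \{-2, 0\}$ and $b_l \in \{0, \dots, N-1\}$, as in (\ref{valuescij}) and (\ref{valuescl}). The aim is to synthesise a viral state sequence whose reconstruction reproduces this UQP. Fixing any connected $A \in \mathcal{A}$, Lemma \ref{lemma:sis_as_UQP} provides a viral state sequence turning the reduced-size reconstruction (\ref{maxlikelireduced}) into the UQP (\ref{opt_prob_ooo}) with coefficients $c_{ij} \in \{-2, 0\}$ and $c_l$ of the form (\ref{clLEmm}). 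The quadratic coefficients already range over the full set $\{-2, 0\}$, so I match them to the $b_{ij}$. For the linear coefficients, which cannot in general equal the integers $b_l$ exactly, I invoke Lemma \ref{lemma:approaching} to pick natural numbers $m_0, m_{1l}, m_{2l}$ realising a $c_l$ as close to $b_l$ as desired; applying it to a target slightly above $b_l$ yields $b_l \le c_l < b_l + \tfrac{1}{N}$. Lemma \ref{lemma:sufficientlyClose} then certifies that every maximiser of the perturbed UQP (\ref{opt_prob_ooo}) is a maximiser of the exact UQP (\ref{01uqpsimpler}): because the exact objective takes integer values, a total linear-coefficient error below $1/N$ over the at most $N$ terms cannot change which assignment is optimal. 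Consequently, solving the reduced-size reconstruction solves max cut.

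To pass from the reduced-size to the full-size problem, I would apply Lemma \ref{lemma:redsizenetrecon}, which produces a viral state sequence $x[1], \dots, x[n_2]$ for which the solution $A_\textup{\textrm{ML}}$ of (\ref{maxlikelihood}) fixes every known entry $a_{ij}$ with $i,j \ge 2$ and determines the remaining row $(A_\textup{\textrm{ML}})_{13}, \dots, (A_\textup{\textrm{ML}})_{1N}$ by maximising (\ref{redSizeSISAmend}). That objective differs from the reduced-size one only by the additive term $\sum_l \kappa_l \log(1 - \delta_T - \beta_T d_l(A) - \beta_T \hat{a}_{1l})$, which by the argument of Appendix \ref{appendix:NPhard} does not affect the NP-hardness; hence solving the full-size reconstruction (\ref{maxlikelihood}) on the sequence $x[1], \dots, x[n_2]$ forces a solution of the original max cut instance.

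The step I expect to be the real obstacle is not correctness but showing the reduction is \emph{polynomial}. The required accuracy $c_l - b_l < \tfrac{1}{N}$ is tied, through (\ref{clLEmm}) and Lemma \ref{lemma:approaching}, to the magnitudes of $m_0, m_{1l}, m_{2l}$, and these integers — together with the multiplicities $\kappa_l$ of Lemma \ref{lemma:redsizenetrecon} — control the observation length $n_2$ and hence the bit-size of the output instance $x[1], \dots, x[n_2]$. I would therefore need to verify that precision $1/N$ can be met with $m_0, m_{1l}, m_{2l}$ and $\kappa_l$ bounded by a polynomial in $N$, so that $n_2 = \mathrm{poly}(N)$ and the whole sequence can be written down in polynomial time. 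Granting this, the map from $G$ to $x[1], \dots, x[n_2]$ is a polynomial-time reduction, and the NP-hardness of maximum cut transfers to the SIS network reconstruction (\ref{maxlikelihood}) for every connected $A \in \mathcal{A}$.
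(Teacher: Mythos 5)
Your proposal follows essentially the same route as the paper: chain Lemma \ref{lemma:sis_as_UQP}, Lemma \ref{lemma:approaching} and Lemma \ref{lemma:sufficientlyClose} so that the reduced-size reconstruction solves the max cut instance, then lift to the full-size problem via Lemma \ref{lemma:redsizenetrecon}, absorbing the additive term $\hat{a}_{1l}\chi_l$ into the linear coefficients exactly as is done in Appendix \ref{appendix:NPhard}. The one step you flag but leave open --- that the multiplicities $m_0$, $m_{1l}$, $m_{2l}$, $\kappa_l$ and hence the observation length $n_2$ can be taken polynomial in $N$ --- is indeed required for a genuine polynomial-time reduction, but the paper's own proof does not verify it either, so your argument is no less complete than the published one.
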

\begin{proof}
Appendix \ref{appendix:NPhard}.
\end{proof} 
We emphasise that the NP-hardness holds for \textit{any} class of connected adjacency matrices $A\in \mathcal{A}$, also for simple topologies such as paths or star graphs.
 
 \section{Conclusions} 

This work considers the computational complexity of finding the ML estimate of the network topology from observing a sampled-time SIS viral state trace. Instead of reconstructing a network for a \textit{given} viral state sequence, we considered the reverse problem of \textit{designing} a viral state sequence such that estimating the presence or absence of links either becomes computationally difficult (Lemma \ref{lemma:sis_as_UQP}) or easy (first statement of Lemma \ref{lemma:redsizenetrecon}). 

Specifically, we have shown that any instance of the NP-hard maximum cut problem can be reduced to an instance of the SIS network reconstruction problem, whereby an instance of the latter problem is given by a viral state sequence. Thus, we have proved that the ML network reconstruction for SIS processes is NP-hard. In general, the exact ML estimate of the network topology can hence not be computed in polynomial time. The NP-hardness is a worst case result, and we emphasise two points. Firstly, it may be possible that the ML network reconstruction can be solved for some classes of practical problems within a reasonable computation time. Nevertheless, it remains to study which viral state sequences could result (possibly on average) in a low computational complexity. Secondly, considering the inapproximability results for the maximum cut problem \cite{gartner2012approximation}, one might be tempted to conclude that an accurate reconstruction of the network for SIS processes is not possible in polynomial time. However, a thorough analysis of the accuracy of the exact ML estimator of an unweighted (and hence discrete valued) adjacency matrix $A$ is an open question.

\section*{Acknowledgements}

We are grateful to Jaron Sanders for helpful discussions on this material.

\bibliographystyle{ieeetr}                                                                                             

\appendix

\section{Proof of Lemma \ref{lemma:sis_as_UQP}}
\label{appendix:sis_as_UQP}
The objective function of (\ref{maxlikelireduced}) equals
\begin{align}
f_n(\hat{a}_{13}, ..., \hat{a}_{1N}) &= \log\left(\mathrm{Pr} \left[x[1], ..., x[n] \Big| \hat{a}_{13}, ..., \hat{a}_{1N} \right] \right) \nonumber\\
&= \sum^{n-1}_{k=1} \log\left(\mathrm{Pr} \left[x[k+1] \Big| x[k],\hat{a}_{13}, ..., \hat{a}_{1N} \right] \right),\label{loglikelihooood}
\end{align}
where the last equality follows from the Markov property of the sampled-time SIS process. To reduce the zero-one UQP (\ref{01uqpsimpler}) to the reduced-size SIS network reconstruction problem (\ref{maxlikelireduced}), we show below that it is possible to construct a series of viral state transitions $x[k] \rightarrow x[k+1]$ for the time points $k = 1, ..., n-1$ for all adjacency matrices $A \in \mathcal{A}$, such that the objective function $f_n$ of the latter problem is of the form
\begin{align}\label{fobj_UQP}
f_n(\hat{a}_{13}, ..., \hat{a}_{1N}) = \sum^{N}_{i = 3} \sum^{N}_{j = i+1} g_{ij} \hat{a}_{1i} \hat{a}_{1j} + \sum^{N}_{l = 3} g_{l} \hat{a}_{1l} + g_\textrm{const},
\end{align}
with the coefficients $g_{ij}$ and $g_l$ and an additive term $g_\textrm{const}$ which is constant with respect to the links $\hat{a}_{13}, ..., \hat{a}_{1N}$ and, hence, can be omitted in the optimisation problem (\ref{maxlikelireduced}). We prove Lemma \ref{lemma:sis_as_UQP} in five steps, on which we elaborate in detail in the respective Subsections \ref{subsec:trans1} to \ref{subsec:multiplicity}.
\begin{enumerate}
\item We design a viral state transition $\mathfrak{I}_{ij}: x[k] \rightarrow x[k+1]$ which results in setting the \textit{quadratic costs} $g_{ij}$ of (\ref{fobj_UQP}) to a value. In Subsection \ref{subsec:multiplicity}, we show that if the viral state transition $\mathfrak{I}_{ij}$ occurs, then we obtain $g_{ij} = -2$, and if it does not occur, then we obtain $g_{ij} = 0$.

\item We design a viral state transition $\mathfrak{I}_{l}: x[k] \rightarrow x[k+1]$ which results in setting the \textit{linear costs} $g_l$ of (\ref{fobj_UQP}) to a \textit{positive} value $g_l>0$. 

\item We design a viral state transition $\mathfrak{C}_{l}: x[k] \rightarrow x[k+1]$ which results in setting the linear cost $g_l$ of (\ref{fobj_UQP}) to a \textit{negative} value $g_l<0$. 

\item We show how two transitions of the kind $\mathfrak{I}_{ij}, \mathfrak{I}_{l}$ and $\mathfrak{C}_{l}$ can be connected by constructing a suitable transition sequence.

\item  We show that it is possible to construct a viral state sequence $x[1], ..., x[n]$ which is composed of several of the three kinds of viral state transitions $\mathfrak{I}_{ij}, \mathfrak{I}_{l}$ and $\mathfrak{C}_{l}$. If the viral state transition $\mathfrak{I}_{l}$ occurs multiple times, then the value of the coefficient $g_l$ increases. On the other hand, if the viral state transition $\mathfrak{C}_{l}$ occurs multiple times, then the value of the coefficient $g_l$ decreases\footnote{In the following Lemma \ref{lemma:approaching}, we show that the coefficient $g_l$ can be set (arbitrarily close) to any value in $\mathbb{R}$ by adjusting the number of occurrences of the transitions $\mathfrak{I}_{l}$ and $\mathfrak{C}_{l}$.}. By choosing the multiplicity of the occurrence of viral state transitions $\mathfrak{I}_{ij}$, $\mathfrak{I}_{l}$ and $\mathfrak{C}_{l}$, we show that the reduced-size SIS network reconstruction (\ref{maxlikelireduced}) becomes a zero-one UQP of the form (\ref{opt_prob_ooo}).
\end{enumerate}

\subsection{Setting the Quadratic Costs}
\label{subsec:trans1}
In order to set the coefficients $g_{ij}$ for $i\ge 3$ and $j \ge i+1$, corresponding to the terms $g_{ij} \hat{a}_{1i} \hat{a}_{1j}$ in the objective function (\ref{fobj_UQP}), we construct the following special case of an infectious transition (\ref{infectionTrans}). The links $\hat{a}_{1i}$ and $\hat{a}_{1j}$ appear simultaneously in the probability for the infectious transition (\ref{infectionTrans}) if both node $i$ and node $j$ are infected at time $k$, i.e. $x_i[k] = x_j[k] = 1$, and node 1 becomes infected at time $k+1$, i.e. $x_1[k]= 0 \rightarrow x_1[k+1]= 1$. We choose the viral state of node 2 as\footnote{If node $i$ and $j$ were the only infected nodes at time $k$, then the transition probability (\ref{transitionProblCij}) would equal zero if both elements $\hat{a}_{1i}= 0$ and $\hat{a}_{1j}= 0$. In that case, we would not be able to express the logarithm of the transition probability in the form (\ref{logTransCCCC}) for all values of the elements $\hat{a}_{1i}, \hat{a}_{1j} \in \{0, 1\}$.} $x_2[k] = 1$ and define the transition
\begin{align*}
\mathfrak{I}_{ij} = \left\{x[k+1] = e_1 +e_2 + e_i + e_j \big| x[k] = e_2 + e_i + e_j\right\}.
\end{align*}
The elements of the vector $e_i \in \mathbb{R}^N$ are given by $(e_i)_m = \delta_{mi}$, where $\delta_{mi}$ is the Kronecker delta. The transition $\mathfrak{I}_{ij}$ is a special case of an infectious transition (\ref{infectionTrans}) and, since $\hat{a}_{12} = a_{12} = 1$ in the reduced-size SIS network reconstruction (\ref{maxlikelireduced}), its transition probability is given by
 \begin{align}\label{transitionProblCij}
 \mathrm{Pr} \left[ \mathfrak{I}_{ij} \Big|  \hat{a}_{13}, ..., \hat{a}_{1N} \right] = \begin{cases} \beta_T \quad ~ &\text{if} \quad \hat{a}_{1i} = 0 \land \hat{a}_{1j} = 0,\\
 2 \beta_T  &\text{if} \quad (\hat{a}_{1i} = 0 \land \hat{a}_{1j} = 1) \lor (\hat{a}_{1i} = 1 \land \hat{a}_{1j} = 0), \\
 3 \beta_T  &\text{if} \quad \hat{a}_{1i} = 1 \land \hat{a}_{1j} = 1.
 \end{cases}
 \end{align}
To compute the objective function $f_n$ according to (\ref{loglikelihooood}), we express the logarithm of the above transition probability (\ref{transitionProblCij}) more compactly as
 \begin{align}
\log\left( \mathrm{Pr} \left[\mathfrak{I}_{ij} \Big|    \hat{a}_{13}, ..., \hat{a}_{1N} \right]\right) &= (1- \hat{a}_{1i})(1 - \hat{a}_{1j}) \log(\beta_T) + \hat{a}_{1i}(1 - \hat{a}_{1j}) \log(2 \beta_T)   \nonumber\\
 & \quad+ (1- \hat{a}_{1i})\hat{a}_{1j} \log(2 \beta_T) + \hat{a}_{1i}\hat{a}_{1j} \log(3\beta_T)\nonumber\\
 &= \log(\beta_T) + \hat{a}_{1i} \log(2)+ \hat{a}_{1j} \log(2) +\hat{a}_{1i}\hat{a}_{1j}\log\left(\frac{3}{4}\right) \label{logTransCCCC}
 \end{align} 
 If solely the transition $\mathfrak{I}_{ij}$ occurred once, then it follows from (\ref{logTransCCCC}) that the quadratic cost of (\ref{fobj_UQP}) would equal $g_{ij} = \log\left(\frac{3}{4}\right)<0$. We emphasise that the transitions $\mathfrak{I}_{ij}$ only need to occur for $i\ge 3$ and $j \ge i+1$ since the quadratic coefficients $g_{ij}$ in the objective function (\ref{fobj_UQP}) only occur for those values of $i$ and $j$.

\subsection{Setting the Linear Costs to a Positive Value}
\label{subsec:trans2}
In order to set the coefficients $g_{l}$, corresponding to the terms $g_{l} \hat{a}_{1l}$ in the objective function of (\ref{fobj_UQP}), to a positive value $g_{l}>0$, we construct the following special case of an infectious transition (\ref{infectionTrans}). The link $\hat{a}_{1l}$ appears in the probability for the infectious transition (\ref{infectionTrans}) if node $l$ is infected at time $k$, i.e. $x_l[k] = 1$, and node 1 becomes infected at time $k+1$, i.e. $x_1[k]= 0 \rightarrow x_1[k+1]= 1$. Analogously to Subsection \ref{subsec:trans1}, we choose the viral state of node 2 as $x_2[k] = 1$ and define the transition
\begin{align*}
\mathfrak{I}_{l}= \left\{x[k+1] = e_1 +e_2 + e_l \big| x[k] = e_2 + e_l\right\}.
\end{align*}
The transition $\mathfrak{I}_{l}$ is a special case of an infectious transition (\ref{infectionTrans}). Since $\hat{a}_{12} = a_{12} = 1$ in the reduced-size SIS network reconstruction (\ref{maxlikelireduced}), the transition probability of $\mathfrak{I}_{l}$ is given by
 \begin{align}\label{kjbkjbkjbkjbk}
 \mathrm{Pr} \left[ \mathfrak{I}_{l} \Big| \hat{a}_{13}, ..., \hat{a}_{1N} \right] = \begin{cases} \beta_T \quad ~ &\text{if} \quad \hat{a}_{1l} = 0,\\
 2 \beta_T  &\text{if} \quad \hat{a}_{1l} = 1.
 \end{cases}
 \end{align}
To compute the objective function $f_n$ according to (\ref{loglikelihooood}), we obtain the logarithm of the above transition probability (\ref{kjbkjbkjbkjbk}) as
 \begin{align}
 \log\left(\mathrm{Pr} \left[\mathfrak{I}_{l} \Big|   \hat{a}_{13}, ..., \hat{a}_{1N} \right]\right) &= (1- \hat{a}_{1l}) \log(\beta_T) + \hat{a}_{1l} \log(2 \beta_T) \nonumber\\
 &=\log(\beta_T)  +  \hat{a}_{1l} \log(2).\label{transitionTl}
 \end{align}
 If solely the transition $\mathfrak{I}_{l}$ occurred once, then it follows from (\ref{transitionTl}) that the linear cost of (\ref{fobj_UQP}) would equal $g_l =  \log(2)>0$.
 
\subsection{Setting the Linear Costs to a Negative Value}
\label{subsec:trans3}
In order to set the coefficients $g_{l}$, corresponding to the terms $g_{l} \hat{a}_{1l}$ in the objective function of (\ref{fobj_UQP}), to a negative value $g_{l}<0$, we construct the following special case of a constant transition (\ref{constTrans}). The link $\hat{a}_{1l}$ appears in the probability for the constant transition (\ref{constTrans}) if node 1 is susceptible and node $l$ is infected ($x_1[k]=0$ and $x_l[k]= 1$). Hence, we define the transition
\begin{align}\label{clTransition}
\mathfrak{C}_{l}= \left\{x[k+1] = e_l \big| x[k] = e_l\right\}.
\end{align}
The transition $\mathfrak{C}_{l}$ is a special case of a constant transition (\ref{constTrans}) and its transition probability can be calculated as follows. From time $k$ to time $k+1$, the probability of the infection of a node $m \neq l$ is
\begin{align*}
\mathrm{Pr}\left[ \text{Node $m$ gets infected at } k+1\Big| x[k]=e_l,  \hat{a}_{13}, ..., \hat{a}_{1N} \right] &= \beta_T \hat{a}_{ml}
\end{align*}
 The probability of an infection of a node at the time $k+1$ is hence
 \begin{align*}
\mathrm{Pr}\left[ \text{A node gets infected at } k+1\Big| x[k]=e_l,  \hat{a}_{13}, ..., \hat{a}_{1N} \right] &= \sum^{N}_{m = 1, m \neq l}\beta_T \hat{a}_{ml}  \\
& = \beta_T \hat{a}_{1l} + \beta_T+\beta_T  \sum^{N}_{m = 3, m \neq l} \hat{a}_{ml}, 
\end{align*}
since $\hat{a}_{12} = a_{12}= 1$ in the reduced-size SIS network reconstruction (\ref{maxlikelireduced}). The probability of the curing (\ref{healTransition}) of node $l$ equals $\delta_T$. Thus, the probability for the constant transition (\ref{clTransition}) becomes
\begin{align} 
\mathrm{Pr} \left[ \mathfrak{C}_{l} \Big|  \hat{a}_{13}, ..., \hat{a}_{1N} \right]  &= 1 - \delta_T -\beta_T \hat{a}_{1l} -\beta_T - \beta_T \sum^{N}_{m = 3, m \neq l} \hat{a}_{ml} \nonumber\\
&= \xi-\beta_T \hat{a}_{1l},\label{lllllqqwww}
\end{align}
where 
\begin{align*} 
\xi=   1 - \delta_T -\beta_T - \beta_T \sum^{N}_{m = 3, m \neq l} \hat{a}_{ml} 
\end{align*}
is constant with respect to the links $\hat{a}_{13}, ..., \hat{a}_{1N}$ and does not have to be considered in the optimisation problem (\ref{maxlikelireduced}). It holds that $\mathrm{Pr} \left[ \mathfrak{C}_{l} \Big|  \hat{a}_{13}, ..., \hat{a}_{1N} \right]$ is in $[0, 1]$ for all link estimates $\hat{a}_{13}, ..., \hat{a}_{1N}$, which implies that $\xi > 0$. To compute the objective function $f_n$ according to (\ref{loglikelihooood}), we obtain the logarithm of the transition probability (\ref{lllllqqwww}) as
 \begin{align}
 \log\left(\mathrm{Pr} \left[ \mathfrak{C}_{l} \Big| \hat{a}_{13}, ..., \hat{a}_{1N} \right]\right) &= (1 - \hat{a}_{1l} )\log\left(\xi\right) + \hat{a}_{1l}  \log\left(\xi - \beta_T\right)  \nonumber\\
 &= \log\left(\xi\right) + \hat{a}_{1l}  \log\left(1 - \frac{\beta_T}{\xi}\right). \label{logTransPosValue}
 \end{align}  
 If solely the transition $\mathfrak{C}_{l}$ occurred once, then it follows from (\ref{logTransPosValue}) that the linear cost of (\ref{fobj_UQP}) would equal $g_l = \log\left(1 - \frac{\beta_T}{\xi}\right) <0$.
 
 \subsection{Connecting Viral State Transitions}
\label{subsec:connecting}
In order to set the coefficients $g_l$ and $g_{ij}$ for more than one node $l$ (or for more than one pair of nodes $i$ and $j$), the transitions $\mathfrak{I}_{ij}$, $\mathfrak{I}_l$ and $\mathfrak{C}_l$ must occur multiple times in the viral state sequence $x[1], ..., x[n]$ for different values of $l$, $i$ and $j$. Consider that one of the transitions $\mathfrak{I}_{ij}$, $\mathfrak{I}_l$ or $\mathfrak{C}_l$ occurs from time $k_0$ to $k_0+1$ and that another (not necessarily different) of the transitions $\mathfrak{I}_{ij}$, $\mathfrak{I}_l$ or $\mathfrak{C}_l$ shall occur from time $k_0 +\Delta k$ to $k_0 + \Delta k +1$ for some $\Delta k\ge 1$. For \textit{any} connected adjacency matrix $A \in \mathcal{A}$, there is a viral state sequence which transform the viral state $x[k_0+1]$ at the end of one transition to the viral state $x[k_0+ \Delta k]$ at the beginning of another transition, as we show in the three steps below. 

\begin{enumerate}
\item If the transition $x[k_0] \rightarrow x[k_0+1]$ is one of the infectious transition $\mathfrak{I}_{ij}$ or $\mathfrak{I}_l$, then node 1 is infected at time $k_0+1$. In that case, we consider that node 1 cures from time $k_0+1$ to $k_0+2$. In the two steps below, replace formally time $k_0+1$ by $k_0+2$.

\item The expressions (\ref{logTransCCCC}), (\ref{transitionTl}) and (\ref{logTransPosValue}) influence the values of the coefficients $g_l$ and $g_{ij}$ in the objective function (\ref{fobj_UQP}). In order to give explicit expressions for coefficients $g_l$ and $g_{ij}$, we would like to achieve that the viral state transitions from time $k_0+1$ to $k_0+ \Delta k$ do not have an influence on the values of any of the coefficients $g_l$ and $g_{ij}$, such that their value is solely determined by the expressions (\ref{logTransCCCC}), (\ref{transitionTl}) and (\ref{logTransPosValue}). 

The coefficients $g_l$ and $g_{ij}$ correspond to addends in the objective function (\ref{fobj_UQP}), which include the links $\hat{a}_{1l}$, $\hat{a}_{1i}$ and $\hat{a}_{1j}$, which are incident to node 1. A link $\hat{a}_{1l}$, which is incident to node 1, appears in the expressions for the probability of a viral state transition $x[k] \rightarrow x[k+1]$ of the sampled-time SIS process for exactly two cases. Firstly, in the probability of an infectious transition (\ref{infectionTrans}) from time $k$ to $k+1$ only if node 1 is infected before or afterwards ($x_1[k]= 1$ or $x_1[k+1]= 1$). Secondly, the link $\hat{a}_{1l}$ may appear in the probability of a constant transition (\ref{constTrans}) from time $k$ to $k+1$. We thus would like to exclude these two kinds of transitions from time $k_0+1$ to $k_0+\Delta k$.

Hence, we want to construct the viral state transitions from time $k_0+1$ to $k_0 + \Delta k$ such that the first node is constantly susceptible ($x_1[k]= 0$ for $k= k_0+1, ..., k_0 + \Delta k$) and additionally, such that there is no constant transition (\ref{constTrans}) from time $k_0+1$ to $k_0 + \Delta k$. Then, the coefficients $g_l$ and $g_{ij}$ in the objective function (\ref{fobj_UQP}) are not affected by any of the viral state transitions from time $k_0+1$ to $k_0 + \Delta k$ and are solely determined by the expressions (\ref{logTransCCCC}), (\ref{transitionTl}) and (\ref{logTransPosValue}).

\item The graph given by an adjacency matrix $A \in \mathcal{A}$ remains connected if node 1 is removed as stated above Definition \ref{def:redsizenetrecon}. Thus, there exists a time $k_0 + \Delta k \ge k_0+1$ and a finite sequence of non-constant transitions of the SIS process which transforms the viral state $x[k_0+1]$ to any other viral state $x[k_1] \in \{0,1\}^{N-1}$ under the constraint that node 1 is susceptible $x_1[k]=0$ for time $k = k_0+1$ to $k_1$: The simplest of such transition sequences would be successive infections (\ref{infectionTrans}), resulting in all nodes $2, ..., N$ being infected, with a subsequent curing (\ref{healTransition}) of those nodes $i$ for which $x_i[k_0 + \Delta k] = 0$ shall hold.
\end{enumerate}

For a network of six nodes, Figure \ref{fig_transition} gives an illustration on how two infectious transitions, namely $\mathfrak{I}_{34}$ and $\mathfrak{I}_6$, can be connected by the viral state sequence described in the three steps above.

\begin{figure}[h!]
	 \includegraphics[width=\textwidth]{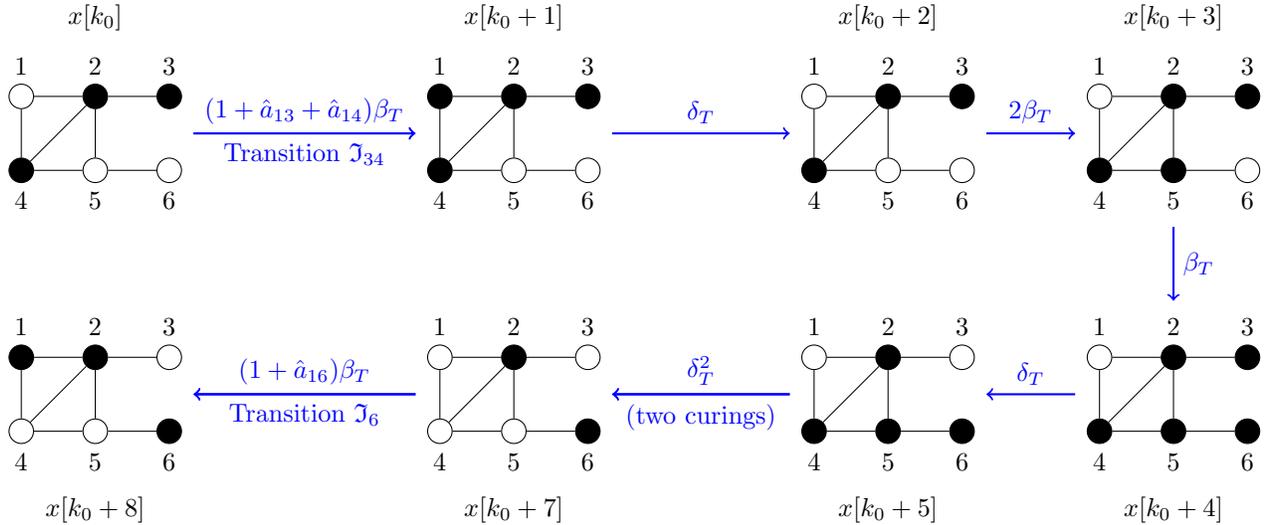}
	{\footnotesize\caption{	An illustration of connecting two viral transitions, namely $\mathfrak{I}_{34}$ from time $k_0$ to $k_0+1$ and $\mathfrak{I}_6$ from time $k_0+7$ to $k_0+8$, for a connected network of six nodes by the procedure described in Subsection \ref{subsec:connecting}. Above the blue arrows, the respective transition probabilities are stated. It holds $\hat{a}_{12}= a_{12}=1$ and $\hat{a}_{ij}= a_{ij}$ for $i, j \ge 2$ in the optimisation problem (\ref{maxlikelireduced}), and thus the transition probabilities from time $k_0+1$ to $k_0+7$ can be stated without the dependency on $\hat{a}_{ij}$. On the other hand, both transitions $\mathfrak{I}_{34}$ and $\mathfrak{I}_6$ do depend on the elements $\hat{a}_{ij}$. Since the transition $\mathfrak{I}_{34}$ from time $k_0$ to $k_0+1$ is an infectious transition, we consider that node 1 cures from time $k_0+1$ to $k_0+2$ according to step one in Subsection \ref{subsec:connecting}. Then, following the description in step two and three of Subsection \ref{subsec:connecting}, every node except node 1 becomes infected from time $k_0+1$ to $k_0+4$. Subsequently, the nodes 3, 4 and 5 cure from time $k_0+4$ to $k_0+7$ as required for the first state of the transition $\mathfrak{I}_6$. In Subsection \ref{subsec:multiplicity}, the viral state sequence from time $k_0+1$ to $k_0+7$ is also denoted by $\mathfrak{F}(x[k_0+1], \mathfrak{I}_6)$, and its length is given by $\tau\left(\mathfrak{F}(x[k_0+1], \mathfrak{I}_6)\right)= 6$.\label{fig_transition}}}
\end{figure}

\subsection{Constructing the Complete Viral State Sequence}
\label{subsec:multiplicity}
We consider that each of the viral state transitions $\mathfrak{I}_{ij}$, $\mathfrak{I}_l$ and $\mathfrak{C}_l$ may occur multiple times, and denote the multiplicities by $m_{ij}$, $m_{1l}$ and $m_{2l}$, respectively. By $\mathfrak{T}$ we denote a viral state transition that is of the kind $\mathfrak{I}_{ij}$, $\mathfrak{I}_l$ or $\mathfrak{C}_l$. Furthermore, we denote by $\mathfrak{F}\left( x[k], \mathfrak{T}\right)$ the viral state sequence which transforms the viral state $x[k]$ at time $k$ to the first state of the transition $\mathfrak{T}$ (see also Figure \ref{fig_transition} for an example), following the description in Subsection \ref{subsec:connecting}. The length (number of discrete time steps) of the viral state sequence $\mathfrak{F}\left( x[k], \mathfrak{T}\right)$ is denoted by $\tau\left(\mathfrak{F}\left( x[k], \mathfrak{T}\right)\right)$. The construction of the whole viral state sequence, which includes the viral state transitions $\mathfrak{I}_{ij}$, $\mathfrak{I}_l$ and $\mathfrak{C}_l$ with the multiplicities $m_{ij}$, $m_{1l}$ and $m_{2l}$, is given in pseudo-code by Algorithm \ref{alg:exploration_walk_one}. We emphasise that if a non-zero multiplicity of a viral state transition is increased, then only the respective for-loop (e.g. line 18 to line 21 for the transition $\mathfrak{I}_l$ if its multiplicity $m_{1l} =c \neq 0$ is increased to $m_{1l}=2c$ for some $c \in \mathbb{N}$) in Algorithm \ref{alg:exploration_walk_one} is run more often. In particular, line 9 is not executed more often when the multiplicities of the viral state transitions are increased.

\begin{algorithm}
\caption{Construction of Viral State Sequence for Reduced-Size SIS Network Reconstruction}
\begin{algorithmic}[1]
\State \textbf{Input: } graph $G = (\mathcal{N}, \mathcal{L})$, initial state $x[0]$, multiplicities $m_{1l}, m_{2l}, m_{ij}$ 
\State \textbf{Output: } viral state sequence $x[1], ..., x[n]$
\State $\mathcal{Q} \gets \left\{ \mathfrak{I}_{ij} \big| r_{ij} =1 \right\} \cup \left\{ \mathfrak{I}_{l} \big| m_{1l} \ge 1 \right\} \cup \left\{ \mathfrak{C}_{l} \big| m_{2l} \ge 1 \right\}$ \Comment{initialise the queue $\mathcal{Q}$}
\State $k \gets 1$
\While{$\mathcal{Q} \neq \emptyset$} 
\State $\mathfrak{T} \gets$ some element of $\mathcal{Q}$ \Comment{dequeue a transition $\mathfrak{T}$ from $\mathcal{Q}$}
\State $\mathcal{Q} \gets \mathcal{Q} \setminus \{\mathfrak{T}\}$ 
\State $\Delta k \gets \tau\left(\mathfrak{F}\left( x[k-1], \mathfrak{T}\right)\right)$  \Comment{length of transition from $x[k-1]$ to first state of $\mathfrak{T}$}
\State $(x[k], ..., x\left[k+\Delta k-1\right]) \gets \mathfrak{F}\left( x[k-1], \mathfrak{T}\right)$
\State $k \gets k+\Delta k$
\If{$\mathfrak{T}= \mathfrak{I}_{ij}$ for some $(i,j)$}
\For{$c=1, ..., ,m_{ij}$}
\State $(x[k], x[k+1])\gets (e_2 + e_i + e_j, e_1 + e_2 + e_i + e_j)$  \Comment{transition $\mathfrak{I}_{ij}$}
\State $k \gets k+2$
\EndFor
\EndIf
\If{$\mathfrak{T}= \mathfrak{I}_l$ for some $l$}
\For{$c=1, ..., ,m_{1l}$}
\State $(x[k], x[k+1])\gets (e_2+e_l, e_1+e_2+e_l)$ \Comment{transition $\mathfrak{I}_l$}
\State $k \gets k+2$
\EndFor
\EndIf
\If{$\mathfrak{T}= \mathfrak{C}_l$ for some $l$}
\For{$c=1, ..., ,m_{2l}$}
\State $x[k]\gets e_l$ \Comment{transition $\mathfrak{C}_l$}
\State $k \gets k+1$
\EndFor
\EndIf
\EndWhile
\State $n \gets k -1$
\end{algorithmic}
\label{alg:exploration_walk_one}
\end{algorithm}

In the following, we show how the multiplicities $m_{ij}$, $m_{1l}$ and $m_{2l}$ of the viral state transitions can be adjusted such that the reduced-size SIS network reconstruction (\ref{opt_prob_ooo}) attains the form (\ref{maxlikelireduced}). For the viral state sequence $x[1], ..., x[n]$ given by the output of Algorithm \ref{alg:exploration_walk_one}, the coefficients $g_{ij}$ of the objective function (\ref{fobj_UQP}) follow from the expression (\ref{logTransCCCC}) for the probability of the viral state transition $\mathfrak{I}_{ij}$ as
 \begin{align}
 g_{ij} = \log\left(\frac{3}{4}\right)  m_{ij}. \label{dij_first}
 \end{align}
Furthermore, the expressions (\ref{logTransCCCC}), (\ref{transitionTl}) and (\ref{logTransPosValue}) for the viral state transitions $\mathfrak{I}_{ij}$ (for $i\ge 3$ and $j \ge i+1$), $\mathfrak{I}_l$ and $\mathfrak{C}_l$, respectively, yield the coefficients $g_l$ as 
 \begin{align*}
 g_{l} &=\log(2) m_{1l} +  \log\left( 1 - \frac{\beta_T}{\xi}\right)m_{2l}+ \log(2) \left(\sum^{l-1}_{i = 3} m_{il}+ \sum^{N}_{i = l+1}m_{li}\right). 
 \end{align*}
 From (\ref{dij_first}) follows that
 \begin{align}
 g_l &= \log(2) m_{1l} +  \log\left( 1 - \frac{\beta_T}{\xi}\right)m_{2l}+ \frac{\log(2)}{\log\left(\frac{3}{4}\right) } \left(\sum^{l-1}_{i = 3} g_{il}+ \sum^{N}_{i = l+1}g_{li}\right).\label{dl_first}
 \end{align}
 The values of the coefficients $g_{ij}$ of the zero-one UQP (\ref{opt_prob_ooo}) have to be either $-2$ or $0$, which we obtain from (\ref{dij_first}) by the two steps below.
\begin{enumerate}
\item We choose that the transition $\mathfrak{I}_{ij}$ either occurs never or, independently of the nodes $i$ and $j$, $m_0$ times. Thus
\begin{align}\label{nIijnIl}
m_{ij} = m_0 r_{ij},
\end{align}
 where the binary variable $r_{ij}$ denotes whether the transition $\mathfrak{I}_{ij}$ occurs either never ($r_{ij} = 0$) or $m_0$ times ($r_{ij} = 1$). Then, the coefficients $g_{ij}$, given by (\ref{dij_first}), become 
 \begin{align*}
 g_{ij} = \log\left(\frac{3}{4}\right)  m_0 r_{ij}.
 \end{align*}
\item We multiply the objective function (\ref{fobj_UQP}) with a constant factor $\mu =  -2 /\log\left(\frac{3}{4}\right) > 0$ and divide by $m_0$, which yields the new objective function
\begin{align}
\tilde{f}_n(\hat{a}_{13}, ..., \hat{a}_{1N}) &= \frac{\mu}{m_0} f_n(\hat{a}_{13}, ..., \hat{a}_{1N}) \label{ftilde}\\
& = \sum^{N}_{i = 3} \sum^{N-1}_{j = i+1} c_{ij} \hat{a}_{1i} \hat{a}_{1j} + \sum^{N}_{l = 3} c_l \hat{a}_{1l} +c_\textrm{const},\nonumber
\end{align}
with the coefficients $c_{ij} = \mu  g_{ij}/m_0$, $c_l = \mu  g_l/m_0$ and $c_\textrm{const} = \mu  g_\textrm{const}/m_0$. The maximisation of $f_n(\hat{a}_{13}, ..., \hat{a}_{1N})$ is equivalent to the maximisation of $\tilde{f}_n(\hat{a}_{13}, ..., \hat{a}_{1N})$. As desired, the coefficients $c_{ij}$ attain the values $-2$ and $0$ for $r_{ij} = 1$ and $r_{ij}= 0$, respectively.
\end{enumerate}
 From (\ref{dl_first}), (\ref{nIijnIl}) and (\ref{ftilde}), we obtain the coefficients $c_l$ of the new objective function $\tilde{f}_n(\hat{a}_{13}, ..., \hat{a}_{1N})$ as 
\begin{align} 
c_l &= \frac{\mu}{m_0}  \log(2) m_{1l}   + \frac{\mu}{m_0}  m_{2l} \log\left( 1 - \frac{\beta_T}{\xi}\right)+ \frac{\mu}{m_0}  \frac{\log(2)}{\log\left(\frac{3}{4}\right) } \left(\sum^{l-1}_{i = 3} g_{il}+ \sum^{N}_{i = l+1}g_{li}\right) .\label{dlfinal}
\end{align}
Since $g_{il} = m_0 c_{il}/\mu$, equation (\ref{dlfinal}) is equivalent to
\begin{align} 
c_l &= \frac{m_{1l}}{m_0} \mu \log(2)    + \frac{ m_{2l} }{m_0}  \mu\log\left( 1 - \frac{\beta_T}{\xi}\right)+ \frac{\log(2)}{\log\left(\frac{3}{4}\right) } \left(\sum^{l-1}_{i = 3} c_{il}+ \sum^{N}_{i = l+1}c_{li}\right) .\label{skkksk}
\end{align}
By defining 
\begin{align}
\lambda_{+} &= \mu \log(2)   > 0 \label{lambdaPlus}\\
\lambda_{-} &= \mu \log\left( 1 - \frac{\beta_T}{\xi}\right) < 0 \label{lambdaMinus}\\
\eta_l &= \frac{\log(2)}{\log\left(\frac{3}{4}\right) } \left(\sum^{l-1}_{i = 3} c_{il}+ \sum^{N}_{i = l+1}c_{li}\right) \ge 0 \label{etaL},
\end{align}
it follows that (\ref{skkksk}) is equivalent to (\ref{clLEmm}). Hence, we have proved Lemma \ref{lemma:sis_as_UQP}.

\section{Proof of Lemma \ref{lemma:approaching}}
\label{appendix:approaching}

Equation (\ref{clLEmm}) shows that the coefficients $c_l$ are determined by the numbers $m_0$ and $m_{1l}$ of infectious transitions $\mathfrak{I}_{ij}$ and $\mathfrak{I}_{l}$ and by the number $m_{2l}$ of constant transitions $\mathfrak{C}_{l}$. The third addend $\eta_l$ in (\ref{clLEmm}) is constant with respect to the number $m_{1l}$, $m_{2l}$ and $m_0$ of occurrences of the viral state transitions $\mathfrak{I}_{ij}$, $\mathfrak{I}_{l}$ and $\mathfrak{C}_{l}$. We consider the two terms with which the coefficients $m_{1l}$ and $m_{2l}$ in equation (\ref{clLEmm}) are multiplied and denote them by $q_0 = \lambda_+ /m_0$ and $q_1 = \lambda_- / m_0$. It holds that $q_0 > 0$ and $q_1 <0$. Furthermore, if $m_0$ grows to infinity, then the absolute value of the two coefficients $q_0$ and $q_1$ becomes arbitrarily small. Thus, for a sufficiently large number $m_0$ of infectious transitions $\mathfrak{I}_{ij}$, we can choose the number $m_{1l}$ of infectious transitions $\mathfrak{I}_{l}$ and the number $m_{2l}$ of constant transitions $\mathfrak{C}_{l}$, such that the coefficient $c_l$, given by (\ref{clLEmm}), is arbitrarily close to any real number $b_l \in \mathbb{R}$. 

\section{Proof of Lemma \ref{lemma:sufficientlyClose}}
\label{appendix:sufficientlyClose}
We define the vector, which is composed of the optimisation variables of the zero-one UQP (\ref{01uqpsimpler}), as
\begin{align*}
y = (y_1, ..., y_N) \in \{0, 1\}^N.
\end{align*}
Furthermore, we denote the objective function of the zero-one UQP (\ref{01uqpsimpler}) by
\begin{align} \label{fobjUQPorig}
f_\textrm{obj}(y) = \sum^{N}_{i = 1} \sum^{N}_{j = i+1} b_{ij} y_i y_j + \sum^{N}_{l = 1} b_{l} y_l .
\end{align}
The coefficients $c_l$ given by (\ref{clLEmm}) do not precisely equal the coefficients $b_l$ for any finite numbers of transitions $m_0, m_{1l}, m_{2l}$. Instead, we have
\begin{align}\label{cltildedef}
c_l = b_l + \varepsilon_l,
\end{align}
with the error $\varepsilon_l$ on the $l$-th coefficient. The statement (\ref{epsiloncloseness}) implies that there is a finite number of transitions $m_0, m_{1l}, m_{2l}$, such that the error terms $\varepsilon_l$ are bounded by an arbitrarily small $\varepsilon_\textrm{max} \in \mathbb{R}^+$ and may be chosen to be non-negative:
\begin{align} \label{epsilonmax}
0 \le \varepsilon_l \le \varepsilon_\textrm{max}, \quad l = 1, ..., N.
\end{align}
Thus, when the coefficients $b_l$ in (\ref{fobjUQPorig}) are replaced by the distorted coefficients $c_l$ in (\ref{cltildedef}), the objective function $f_\textrm{obj}$, given by (\ref{fobjUQPorig}), is replaced by
\begin{align*}
\tilde{f}_\textrm{obj}(y)&= \sum^{N}_{i = 1} \sum^{N}_{j = i+1} b_{ij} y_i y_j + \sum^{N}_{l = 1} c_l y_l  \\
&= \sum^{N}_{i = 1} \sum^{N}_{j = i+1} b_{ij} y_i y_j + \sum^{N}_{l = 1} b_{l} y_l  + \sum^{N}_{l = 1} \varepsilon_l y_l. 
\end{align*}
More compactly, we obtain
\begin{align} \label{ftildecompact}
\tilde{f}_\textrm{obj}(y)&= f_\textrm{obj}(y)  + \varepsilon^T y, 
\end{align}
with the error vector $\varepsilon = (\varepsilon_1, ..., \varepsilon_N)^T$. 

Our aim is to show that the solution $\tilde{y}_\textrm{opt}$, or one of the solutions, to the zero-one UQP (\ref{01uqpsimpler}), with the objective function $\tilde{f}_\textrm{obj}$ given by (\ref{ftildecompact}), is also a solution to the original zero-one UQP (\ref{01uqpsimpler}) with the objective function $f_\textrm{obj}$ given by (\ref{fobjUQPorig}). Hence, the solution $\tilde{y}_\textrm{opt}$ would also be a solution to the maximum cut problem. More precisely, we want to show that
\begin{align}
\exists \tilde{y}_\textrm{opt} \in S_\textrm{opt}: \quad \tilde{f}_\textrm{obj}(\tilde{y}_\textrm{opt}) > \tilde{f}_\textrm{obj}(y) \quad \forall y \not \in S_\textrm{opt}, \label{conditionSolution}
\end{align}
where the set of solutions to the zero-one UQP (\ref{01uqpsimpler}) with the objective function $f_\textrm{obj}$, given by (\ref{fobjUQPorig}), is denoted as $S_\textrm{opt}$. We denote the value of the objective function $f_\textrm{obj}$, given by (\ref{fobjUQPorig}), evaluated at one of the elements in $S_\textrm{opt}$ as
\begin{align} \label{definitionfopt}
f_\textrm{opt} = f_\textrm{obj}(y), \quad y \in S_\textrm{opt}.
\end{align}
Furthermore, we define the gap from the optimal value $f_\textrm{opt}$ to the next largest value, that the objective function $f_\textrm{obj}$ attains, as
\begin{align}\label{fDelta}
 \begin{aligned}
\Delta f = &\underset{y}{\text{ min }} & &  f_\textrm{opt} -f_\textrm{obj}(y) & \\
 &\text{ s.t.} & &  y \not\in S_\textrm{opt}.&
\end{aligned} 
\end{align}
It holds $\Delta f \ge 1$, since the maximum cuts, given by the elements in $S_\textrm{opt}$, contain at least one more link than any suboptimal cut.

With the definitions above, we can show the statement (\ref{conditionSolution}) as follows. The equations (\ref{ftildecompact}) and (\ref{definitionfopt}) yield, for any $\tilde{y}_\textrm{opt} \in S_\textrm{opt}$ and any $y \not\in S_\textrm{opt}$, that
\begin{align*}
\tilde{f}_\textrm{obj}(\tilde{y}_\textrm{opt}) - \tilde{f}_\textrm{obj}(y) & =  f_\textrm{opt} - f_\textrm{obj}(y)  + \varepsilon^T (\tilde{y}_\textrm{opt} - y)\\
&\ge \Delta f  + \varepsilon^T (\tilde{y}_\textrm{opt} - y), 
\end{align*}
where the inequality follows from (\ref{fDelta}). Since the optimisation variables $y_l$ are either 0 or 1, we have $\tilde{y}_\textrm{opt} - y \ge -u$, where the inequality holds component-wise and $u = (1, ..., 1)^T \in \mathbb{R}^N$ denotes the all-one vector. As stated by (\ref{epsilonmax}), the error terms $\varepsilon_l$ are positive. Hence, we obtain
\begin{align}
\tilde{f}_\textrm{obj}(\tilde{y}_\textrm{opt}) - \tilde{f}_\textrm{obj}(y) & \ge \Delta f  - \varepsilon^T u\nonumber \\
& \ge \Delta f  - N \varepsilon_\textrm{max}, \label{jjjssial}
\end{align}
where the last inequality follows from (\ref{epsilonmax}). From the inequality (\ref{jjjssial}) we obtain
\begin{align*}
\tilde{f}_\textrm{obj}(\tilde{y}_\textrm{opt}) > \tilde{f}_\textrm{obj}(y) 
\end{align*}
if 
\begin{align*}
 \varepsilon_\textrm{max}  < \frac{1}{N} \le \frac{\Delta f }{N}.
\end{align*}

\section{Proof of Lemma \ref{lemma:redsizenetrecon}}
\label{appendix:redsizenetrecon}
 The objective function of the full-size SIS network reconstruction (\ref{maxlikelihood}) at time $n_2>n_1$ satisfies 
\begin{align}
f_{n_2}(\hat{A}) &=  \log\left(\mathrm{Pr}\left[x[1], ..., x[n_2]\big| \hat{A}\right]\right) \nonumber\\
&=\log\left(\mathrm{Pr}\left[x[1], ..., x[n_1]\big| \hat{A}\right]\right)  + \sum^{n_2-1}_{k = n_1} \log\left(\mathrm{Pr}\left[x[k+1]\big|x[k], \hat{A}\right]\right), \label{objectiveNEWWW}
\end{align}
which follows from the Markov property of the SIS process. We adjust the second addend of (\ref{objectiveNEWWW}) by constructing the viral state sequence $x[n_1+1], ..., x[n_2]$, such that the objective function $f_{n_2}$ at time $n_2$ attains the form (\ref{redSizeSISAmend}) in the second statement of Lemma \ref{lemma:redsizenetrecon}.

We divide the first statement of Lemma \ref{lemma:redsizenetrecon} into two parts: Firstly, we show in Subsection \ref{subsec:existence} how to construct a viral state sequence $x[n_1+1], ..., x[n_2]$, such that $(A_\textrm{ML})_{ij} = a_{ij}$ if $a_{ij}=1$. Secondly, we show in Subsection \ref{subsec:absence} how to construct a viral state sequence $x[n_1+1], ..., x[n_2]$, such that $(A_\textrm{ML})_{ij} = a_{ij}$ if $a_{ij}=0$. The second statement of Lemma \ref{lemma:redsizenetrecon} is proved in Subsection \ref{subsec:firstStatementLem}.

\subsection{Enforce Existence of Links}
\label{subsec:existence}
We denote the set of links $(i, j) \in \mathcal{L}$ in the first statement of Lemma \ref{lemma:redsizenetrecon} by 
\begin{align*}
\bar{\mathcal{L}} = \left\{(i, j) \in \mathcal{L}\big| (i, j)= (1, 2) \lor (i \ge 2\land j\ge 2) \right\}.
\end{align*}
We aim to construct a viral state sequence such that the ML estimate (\ref{maxlikelihood}) satisfies $(A_\textrm{ML})_{ij} = a_{ij}$ if the element of the true adjacency matrix is $a_{ij}=1$ for all links $(i, j) \in \bar{\mathcal{L}}$. We make use of the following transition: If a node $j$ gets infected at time $k+1$ and only node $i$ has been infected at time $k$, then there must be a link between node $i$ and $j$. We define the infectious transition, followed by a curing of node $i$, as
\begin{align}\label{viraltransE}
\mathfrak{E}_{ij}= \left\{x[k+2] = e_j, x[k+1] = e_i +e_j  \big| x[k] = e_i \right\}.
\end{align}
The probability of the transition $\mathfrak{E}_{ij}$ follows from (\ref{infectionTrans}) and (\ref{healTransition}) as
\begin{align}\label{prijoooo}
\mathrm{Pr}\left[\mathfrak{E}_{ij}\big| \hat{A}\right] = \begin{cases}
\beta_T \delta_T \quad &\text{if} \quad \hat{a}_{ij} = 1, \\
0 &\text{if} \quad \hat{a}_{ij} = 0 .
\end{cases}
\end{align}
We construct the viral state sequence $x[n_1+1], ..., x[n_2]$ such that it contains $\mathfrak{E}_{ij}$ at least once for all links $(i, j) \in \bar{\mathcal{L}}$. Then, it follows from (\ref{prijoooo}) that if the underlying matrix has the element $a_{ij}=1$ but the solution candidate $\hat{A}$ contains a zero element $\hat{a}_{ij}=0$ for any link $(i, j) \in \bar{\mathcal{L}}$, then the objective function of (\ref{maxlikelihood}) becomes zero: $\mathrm{Pr}\left[ x[1], ..., x[n_2] \big| \hat{A}\right] = 0$. Thus, the solution $A_\textrm{ML}$ to the full-size SIS network reconstruction (\ref{maxlikelihood}) with the objective function (\ref{objectiveNEWWW}) has to satisfy
\begin{align} \label{existenceLinks} 
 (A_\textrm{ML})_{ij} = 1 \quad \text{if} \quad a_{ij} = 1, \quad \forall(i, j) \in \bar{\mathcal{L}}
\end{align}

\subsection{Enforce Absence of Links}
\label{subsec:absence}
We aim to construct a viral state sequence such that the ML estimate (\ref{maxlikelihood}) satisfies $(A_\textrm{ML})_{ij} = a_{ij}$ if the element of the true adjacency matrix is $a_{ij}=0$. We observe the following: If solely a node $l$ is infected at time $k$ and the viral state $x[k]$ does not change from time $k$ to $k+1$, then the existence of a link from node $l$ to another node $m$ becomes \textit{less probable}, which follows from (\ref{constTrans}). For a node $l\ge 2$, we define the constant viral state transition 
\begin{align}\label{viraltransA}
\mathfrak{A}_l= \left\{x[k+1] = e_l  \big| x[k] = e_l \right\}.
\end{align}
The probability of the transition above follows from (\ref{constTrans}) as 
\begin{align}
\mathrm{Pr}\left[\mathfrak{A}_l\big| \hat{A}\right] &= 1 - \delta_T - \sum^N_{m=1} \beta_T \hat{a}_{ml}. \label{probabililliil}
\end{align}
We consider that the transition $\mathfrak{A}_l$ successively occurs $\kappa_l$ times from some time $k_0 \in \{n_1 +1, ..., n_2\}$ to time $k_0 + \kappa_l$. For ease of exposition and without loss of generality, we assume that $k_0= n_1 + 1$. Hence, the transition $\mathfrak{A}_l$ multiply occurs from time $n_1+1$ to time $n_1 +\kappa_l +1$. Then, the probability of the transition sequence from time $n_1+1$ to $n_1 +\kappa_l +1$ follows from (\ref{probabililliil}) as 
\begin{align}
 \log\left(\mathrm{Pr}\left[x[n_1 + \kappa_l+1] = x[n_1 + \kappa_l]= ...= x[n_1 + 2] = e_l \big|x[n_1+1] = e_l, \hat{A}\right]\right) = \kappa_l \log\left( \mathrm{Pr}\left[\mathfrak{A}_l\big| \hat{A}\right]\right).\label{kkkkkssasd}
\end{align}
The objective function of the full-size SIS network reconstruction (\ref{maxlikelihood}) at time $n_1+\kappa_l+1$ becomes
\begin{align*}
f_{n_1+\kappa_l+1} (\hat{A}) &= f_{n_1}(\hat{A}) + \kappa_l \log\left( \mathrm{Pr}\left[\mathfrak{A}_l\big| \hat{A}\right]\right)  \\
&=f_{n_1}(\hat{A}) + \kappa_l \log\left(1 - \delta_T - \beta_T  \sum^N_{m=1} \hat{a}_{ml}  \right),
\end{align*}
where the last equality follows from (\ref{probabililliil}) and (\ref{kkkkkssasd}). By defining the degree of node $l$ minus the element $\hat{a}_{1l}$ as
\begin{align}\label{kkqquqqq}
d_l(\hat{A}) = \sum^N_{m=2} \hat{a}_{ml},
\end{align}
we finally formulate the objective function of the SIS network reconstruction (\ref{maxlikelihood}) at time $n_1+\kappa_l+1$ as
\begin{align}\label{bnewwwkjbhbba}
f_{n_1+\kappa_l+1} (\hat{A}) &= f_{n_1}(\hat{A})  + \kappa_l \log\left(1 - \delta_T - \beta_T d_l(\hat{A})   -\beta_T \hat{a}_{1l} \right).
\end{align}
Based on the above formulation of the objective function (\ref{bnewwwkjbhbba}), we will show that if the number $\kappa_l$ of occurrences of the transition $\mathfrak{A}_l$ is great enough, then the solution $A_\textrm{ML}$ to the SIS network reconstruction (\ref{maxlikelihood}) satisfies $(A_\textrm{ML})_{ml} = a_{ml}$ for all nodes $m \ge 2$.

Due to $a_{ij} = 1 \Rightarrow \hat{a}_{ij} = 1$ for $i,j\ge 2$ as stated by (\ref{existenceLinks}), the ML estimate $A_\textrm{ML}$ has at least as many links between the nodes $i,j\ge 2$ as the true adjacency matrix $A$. Thus, the degree $d_l(A)$ of node $l$ of the true adjacency matrix $A$, given by (\ref{kkqquqqq}) when replacing $\hat{a}_{ml}$ by $a_{ml}$, is upper bounded by 
\begin{align}
d_l(A_\textrm{ML}) \ge d_l(A).  \label{kkkssaaa}
\end{align}
Furthermore, since $(A_\textrm{ML})_{ij} = a_{ij}$ for $i,j \ge 2$, we obtain 
\begin{align}
d_l(A_\textrm{ML}) = d_l(A) \Leftrightarrow (A_\textrm{ML})_{ml} = a_{ml} \quad \forall m=2, ..., N. \label{llldl}
\end{align}
Hence, it is sufficient to show that the ML estimate $A_\textrm{ML}$ satisfies $d_l(A_\textrm{ML}) = d_l(A)$ in order to prove the second statement of Lemma \ref{lemma:redsizenetrecon}.

In the following, we consider two solution candidates to the full-size SIS network reconstruction (\ref{maxlikelihood}): two matrices $\hat{A}_1$ and $\hat{A}_2$. We assume that the first row (and column) of the two solution candidates are equal, i.e. 
\begin{align}\label{equalA1A2}
(\hat{A}_1)_{1m}  = (\hat{A}_2)_{1m},\quad m = 1, ..., N.
\end{align}
From (\ref{equalA1A2}) follows that the matrices $\hat{A}_1$ and $\hat{A}_2$ result in the same objective value for the \textit{reduced-size} SIS network reconstruction (\ref{maxlikelireduced}), since the optimisation is only with respect to the matrix elements $\hat{a}_{1m}$ for $m = 3, ..., N$.  We consider that the two matrices $\hat{A}_1$ and $\hat{A}_2$ differ as follows. On the one hand, the first solution candidate $\hat{A}_1$ is a matrix that satisfies (\ref{kkkssaaa}) with equality:
\begin{align}\label{kkkssaa}
d_l(\hat{A}_1)= d_l(A).
\end{align}
On the other hand, the second solution candidate $\hat{A}_2$ is a matrix that does not satisfy (\ref{kkkssaaa}) with equality:
\begin{align*}
d_l(\hat{A}_2) > d_l(A).
\end{align*}
To check which of the matrices $\hat{A}_1$ and $\hat{A}_2$ yields a greater objective value of the \textit{full-size} SIS network reconstruction (\ref{maxlikelihood}) at time $n_1 +\kappa_l +1$, we compute the difference of the objective function (\ref{bnewwwkjbhbba}) as
\begin{align}
f_{n_1+\kappa_l+1} (\hat{A}_1) - f_{n_1+\kappa_l+1} (\hat{A}_2) &=  f_{n_1} (\hat{A}_1)-f_{n_1} (\hat{A}_2)+ \kappa_l \log\left(1 - \delta_T - \beta_T  d_l(\hat{A}_1)-\beta_T (\hat{A}_1)_{1l} \right) \nonumber\\
&\quad \quad-\kappa_l \log\left(1 - \delta_T - \beta_T  d_l(\hat{A}_2)-\beta_T (\hat{A}_2)_{1l} \right) \nonumber\\
 &=  f_{n_1} (\hat{A}_1)-f_{n_1} (\hat{A}_2)+ \kappa_l \gamma_l, \label{lldddkksa}
\end{align}
where
\begin{align*}
\gamma_l =  \log\left(\frac{1 - \delta_T - \beta_T  d_l(\hat{A}_1)-\beta_T (\hat{A}_1)_{1l}}{1 - \delta_T - \beta_T  d_l(\hat{A}_2)-\beta_T (\hat{A}_2)_{1l}} \right) .
\end{align*}
It holds $d_l(\hat{A}_2)> d_l(A) =  d_l(\hat{A}_1)$ and, as stated by (\ref{equalA1A2}), $(\hat{A}_1)_{1l} = (\hat{A}_2)_{2l}$. Thus, it holds $\gamma_l >0$. Since the difference $f_{n_1} (\hat{A}_1)-f_{n_1} (\hat{A}_2)$ is finite, there is a number $\kappa_l \in \mathbb{N}$ of occurrences of the transition $\mathfrak{A}_l$, such that the right-hand side of (\ref{lldddkksa}) is positive, which implies $f_{n_1+\kappa_l+1} (\hat{A}_1) > f_{n_1+\kappa_l+1} (\hat{A}_2)$. Hence, the matrix $\hat{A}_1$ results in a greater objective value of the optimisation problem (\ref{maxlikelihood}) than the matrix $\hat{A}_2$ for a sufficiently large number of transitions $\kappa_l$, and the matrix $\hat{A}_2$ cannot be a solution of (\ref{maxlikelihood}). Thus, if the number of transitions $\kappa_l$ is sufficiently great, then the matrix $A_\textrm{ML}$ that solves the full-size SIS network reconstruction (\ref{maxlikelihood}) has to be of the kind $\hat{A}_1$ and satisfy equation (\ref{kkkssaa}): $d_l(A_\textrm{ML}) = d_l(A)$. As stated by (\ref{llldl}), the equation $d_l(A_\textrm{ML}) = d_l(A)$ is equivalent to $(A_\textrm{ML})_{ml} = a_{ml}$ for all nodes $m \ge 2$.

In order to complete the proof of the first statement of Lemma \ref{lemma:redsizenetrecon}, it needs to hold $(A_\textrm{ML})_{ml} = a_{ml}$ for all nodes $m \ge 2$ and \textit{additionally} for all nodes $l \ge 2$. We achieve $(A_\textrm{ML})_{ml} = a_{ml}$ for all nodes $m,l \ge 2$ as follows. We design the viral state sequence $x[n_1+1], ..., x[n_2]$ such that it solely consists of two kind of viral transitions: Firstly, the transitions $\mathfrak{E}_{ij}$ given by (\ref{viraltransE}) for all links $(i, j) \in \bar{\mathcal{L}}$. Secondly, the transitions $\mathfrak{A}_l$ given by (\ref{viraltransA}), which occur $\kappa_l$ times for all nodes $l \ge 2$. 

Finding a \emph{shortest}\footnote{There is no necessity to use the shortest walk here, as long as the walk visits every link $(i, j) \in \bar{\mathcal{L}}$.} walk which traverses every link in a graph is known as the Chinese Postman Problem (CPP) or route intersection problem \cite{edmonds1973matching}. The CPP is solvable in polynomial time. Since every link $(i, j) \in \bar{\mathcal{L}}$ has to be traversed by an infection, we define the graph $\bar{G} = (\mathcal{N}, \bar{\mathcal{L}})$ and denote the solution to the CPP as
\begin{align*}
\left((i_1, j_1), ..., (i_r, j_r) \right)= \text{CPP}\left(\bar{G} \right),
\end{align*}
where $i_1,..., i_r$ and $j_1, ..., j_r$ denote the successive nodes of the walk, where $i_{l+1} = j_{l}$, and $(i_l, j_l)\in \bar{\mathcal{L}}$ denote the traversed links. Algorithm \ref{alg:exploration_walk_two} illustrates in pseudo-code how the required viral state sequence $x[n_1+1], ..., x[n_2]$ can be constructed.

\begin{algorithm}
\caption{Construction of Viral State Sequence for Full-Size SIS Network Reconstruction}
\begin{algorithmic}[1]
\State \textbf{Input: } graph $\bar{G} = (\mathcal{N}, \bar{\mathcal{L}})$, multiplicities $\kappa_2, ..., \kappa_N$
\State \textbf{Output: } viral state sequence $x[n_1+1], ..., x[n_2]$
\State $\left((i_1, j_1), ..., (i_r, j_r) \right) \gets \text{CPP}\left(\bar{G} \right)$
\State $\mathcal{D} \gets \emptyset$ \Comment{set of visited nodes}
\State $k \gets 1$
\For{$l= 1, ..., r$}
\State $p \gets i_l$, $q \gets j_l$
\State $x[k] \gets e_p$
\State $k \gets k+1$
\If{$p \not\in \mathcal{D} \land p \neq 1$}
\State $(x[k], x[k+1], ..., x[k+\kappa_p-1])\gets (e_p, e_p,..., e_p)$ \Comment{$\kappa_p$ times constant transition $\mathfrak{A}_p$}
\State $k \gets k+\kappa_p$
\State $\mathcal{D} \gets \mathcal{D} \cup \{p\}$
\EndIf
\State $x[k] \gets e_p+e_q$ \Comment{essential part of transition $\mathfrak{E}_{pq}$ (infection from node $p$ to $q$)}
\State $k \gets k+1$
\EndFor
\State $x[k] \gets e_q$
\State $k \gets k+1$
\end{algorithmic}
\label{alg:exploration_walk_two}
\end{algorithm}

With the construction of the viral state sequence $x[n_1+1], ..., x[n_2]$ as described by Algorithm \ref{alg:exploration_walk_two}, the objective function (\ref{objectiveNEWWW}) at time $n_2$ becomes
\begin{align} \label{fn2NEWW}
f_{n_2}(\hat{A}) = f_{n_1}(\hat{A})+ \sum^N_{l=2}\kappa_l \log\left(1 - \delta_T - \beta_T  d_l(\hat{A})-\beta_T \hat{a}_{1l} \right) + \zeta,
\end{align}
where $\zeta$ is finite and depends on the transition probabilities $\mathrm{Pr}\left[\mathfrak{E}_{ij}\big| \hat{A}\right] = \beta_T \delta_T \hat{a}_{ij}$, given by (\ref{prijoooo}), for the links $(i, j) \in \bar{\mathcal{L}}$. 

 By choosing the number of transitions $\kappa_l$ sufficiently great for all nodes $l \ge 2$, we finally obtain that the matrix $A_\textrm{ML}$ that solves the full-size SIS network reconstruction (\ref{maxlikelihood}) at time $n_2$ with the objective function (\ref{fn2NEWW}) has to satisfy $(A_\textrm{ML})_{ij} = a_{ij}$ for all links $(i, j) \in \bar{\mathcal{L}}$.
 
\subsection{Second Statement of Lemma \ref{lemma:redsizenetrecon}}
\label{subsec:firstStatementLem}
 As given by the first statement of Lemma \ref{lemma:redsizenetrecon}, the solution $\hat{A}_\textrm{ML}$ to the full-size SIS network reconstruction (\ref{maxlikelihood}) with the objective function (\ref{fn2NEWW}) has to satisfy $(\hat{A}_\textrm{ML})_{ij} = a_{ij}$ for $(i, j) \in \bar{\mathcal{L}}$. Hence, the full-size SIS network reconstruction problem (\ref{maxlikelihood}) at time $n_2$ becomes
 \begin{align} \label{sisFullsizeAlt}
 \begin{aligned}
\hat{A}_\textrm{ML} = &\underset{\hat{A}}{\textup{ arg max }} & &  f_{n_1}(\hat{A})+ \sum^N_{l=2}\kappa_l \log\left(1 - \delta_T - \beta_T  d_l(\hat{A})-\beta_T \hat{a}_{1l} \right) + \zeta & \\
 &\textup{ s.t. } & &  \hat{a}_{ij} \in \{0, 1\}, \quad i,j= 1, ..., N,&\\
 &&& \hat{a}_{ij} = \hat{a}_{ji}, \quad\quad i,j= 1, ..., N,&\\
 &&& \hat{a}_{ii} = 0, \quad\quad\quad i= 1, ..., N,&\\
 &&& \hat{a}_{ij} = a_{ij}, \quad\quad \forall(i, j) \in \bar{\mathcal{L}}, &
\end{aligned}
\end{align}
where the objective function follows from (\ref{objectiveNEWWW}) and (\ref{fn2NEWW}). Since the optimisation variables $\hat{a}_{ij}$ in (\ref{sisFullsizeAlt}) are fixed to $a_{ij}$ for $(i, j) \in \bar{\mathcal{L}}$, the optimisation takes place only with respect to the elements $\hat{a}_{13}, ..., \hat{a}_{1N}$. Furthermore, the term $\zeta$ does not depend on the links $\hat{a}_{13}, ..., \hat{a}_{1N}$ and can be omitted in (\ref{sisFullsizeAlt}). By the formal replacement
 \begin{align*}
\mathrm{Pr} \left[x[1], ..., x[n_1] \Big| \hat{a}_{13}, ..., \hat{a}_{1N} \right] = f_{n_1}(\hat{A}) \quad\text{if} \hat{a}_{ij} = a_{ij} \quad \forall (i, j) \in \bar{\mathcal{L}},
\end{align*}
we obtain the second statement of Lemma \ref{lemma:redsizenetrecon}.

\section{Proof of Theorem \ref{theorem:NPhard}}
\label{appendix:NPhard}

To show that the optimisation problem (\ref{redSizeSISAmend}) is NP-hard, we consider the addends in the sum of its objective function, which equal
\begin{align} \label{jkjjjjjjjjjjjjjjjjjjjjjjjjjjjj}
\kappa_l \log\left(1 - \delta_T - \beta_T  d_l(\hat{A})-\beta_T \hat{a}_{1l} \right)  &=\kappa_l \log\left(1 - \delta_T - \beta_T  d_l(A) \right) +\hat{a}_{1l}  \kappa_l\log\left(\frac{1 - \delta_T - \beta_T  d_l(A)-\beta_T}{1 - \delta_T - \beta_T d_l(A)}\right), 
\end{align}
where we used the fact that $d_l(A) = d_l(\hat{A})$ as stated Subsection \ref{subsec:absence}. The first addend in (\ref{jkjjjjjjjjjjjjjjjjjjjjjjjjjjjj}) is constant with respect to the links $\hat{a}_{1m}$ for all nodes $m$ and thus the term has not to be considered in the optimisation problem (\ref{redSizeSISAmend}). However, the second addend in (\ref{jkjjjjjjjjjjjjjjjjjjjjjjjjjjjj}) given by $\hat{a}_{1l} \chi_l$, where
\begin{align*}
\chi_l =  \kappa_l\log\left(\frac{1 - \delta_T - \beta_T  d_l(A)-\beta_T}{1 - \delta_T - \beta_T d_l(A)}\right),
\end{align*}
 is not constant with respect to the elements $\hat{a}_{1m}$ and has to be considered in the optimisation problem (\ref{redSizeSISAmend}). Hence, the optimisation problem (\ref{redSizeSISAmend}) is of the form (\ref{opt_prob_ooo}) when the coefficients $c_l$ in (\ref{opt_prob_ooo}) are replaced by $c_l + \chi_l$, and the optimisation problem (\ref{redSizeSISAmend}) becomes
\begin{align}\label{jjjdddaaa}
 \begin{aligned}
&\underset{\hat{a}_{13}, ..., \hat{a}_{1N}}{\textup{max }} & & \sum^{N}_{i = 3} \sum^{N}_{j = i+1} c_{ij} \hat{a}_{1i} \hat{a}_{1j} + \sum^{N}_{l = 3} (c_l + \chi_l) \hat{a}_{1l} & \\
 &\textup{s.t. } & &  \hat{a}_{1i} \in \{0, 1\}, \quad i= 3, ..., N.&
\end{aligned}
\end{align}
Since the term $\chi_l$ is constant with respect to the elements $\hat{a}_{1m}$, it follows from Lemma \ref{lemma:approaching} that the coefficient $c_l$ can be set such that the coefficient $(c_l + \chi_l)$ approaches any real number arbitrarily close. If the coefficients $c_{ij}$ in (\ref{jjjdddaaa}) equal the coefficients $b_{ij}$ in the zero-one UQP (\ref{01uqpsimpler}) and the difference of the coefficients $(c_l + \chi_l)$ in (\ref{jjjdddaaa}) to the coefficients $b_{l}$ in (\ref{01uqpsimpler}) is positive and smaller than $1/N$, then it follows from Lemma \ref{lemma:sufficientlyClose} that the solution to (\ref{jjjdddaaa}) is also a solution to (\ref{01uqpsimpler}). Hence, solving the optimisation problem (\ref{jjjdddaaa}), which resulted from the full-size SIS network reconstruction (\ref{maxlikelihood}) as stated by Lemma \ref{lemma:redsizenetrecon}, implies solving the NP-hard zero-one UQP (\ref{01uqpsimpler}). 
 
\end{document}